\newtheorem{proposition}{Proposition}
\newtheorem{theorem}{Theorem}
\newtheorem{proof}{Proof}
\numberwithin{equation}{section}
\def\tapn{A-TPN}
\def\tppn{P-TPN}
\def\ttpn{T-TPN}
\def\ttappn{\{P,T,A\}-TPN}
\title{On interleaving in \{P,A\}-Time Petri nets with strong semantics}
\author{Hanifa Boucheneb
\institute{Laboratoire VeriForm, \'{E}cole Polytechnique de Montr\'{e}al,
P.O. Box 6079,\\ Station Centre-ville, Montr\'{e}al, Qu\'{e}bec,Canada, H3C 3A7}
\email{hanifa.boucheneb@polymtl.ca}   \and Kamel Barkaoui \institute{Laboratoire CEDRIC,
Conservatoire National des Arts et M\'etiers, 292 rue Saint Martin, Paris Cedex 03, France }
\email{kamel.barkaoui@cnam.fr} }
\begin{document}
\maketitle

\begin{abstract}
This paper deals with the reachability analysis of \{P,A\}-Time Petri nets (\{P,A\}-TPN in short) in the context of strong semantics. It investigates the convexity of the union of state classes reached by different interleavings of the same set of transitions. In \cite{infinity08}, the authors have considered the T-TPN model and its Contracted State Class Graph (CSCG) \cite{acsd07} and shown that this union is not necessarily convex. They have however established some sufficient conditions which ensure convexity. This paper shows that for the CSCG of \{P,A\}-TPN, this union is convex and can be computed without computing intermediate state classes. These results allow to improve the forward reachability analysis by agglomerating, in the same state class, all state classes reached by different interleavings of the same set of transitions (abstraction by convex-union).
\end{abstract}

%\keywords{P-Time Petri nets, state space abstractions, interleaving semantics, reachability analysis.}

%\maketitle
\section{introduction}
Petri nets are established as a suitable formalism for modeling
 concurrent and dynamic systems. They are used in many fields (computer science, control systems, production systems, etc.). Several extensions to time factor have
been defined to take into account different features of the system as well as its time constraints. The time constraints may be expressed in terms of stochastic delays of transitions (stochastic Petri nets), fixed values associated with places or transitions (\{P,T\}-Timed Petri nets), or intervals labeling places, transitions or arcs (\{P,T,A\}-Time Petri Nets) \cite{khansa-wodes-96,Merlin,Walter}. For \{P,T,A\}-Time Petri Nets, there are two firing semantics: Weak Time Semantics (WTS) and Strong Time Semantics (STS). For both semantics, each enabled transition has an explicit or implicit firing interval derived from time constraints associated with places, transitions or arcs of the net. A transition cannot be fired outside its firing interval, but in WTS, its firing is not forced when the upper bound of its firing interval is reached. Whereas in STS, it must be fired within its firing interval unless it is disabled. The STS is the most widely used semantics. There are also multiple-server and single-server semantics. The multiple-server semantics allows to handle, at the same time, several time intervals per place (P-TPN), per arc (A-TPN) or per transition (T-TPN) whereas it is not allowed in the single-server semantics. % In the context of this paper, we consider the single-server semantics, where only one time interval per place, per transition or per arc is allowed at the same time.
\par In \cite{boyer-FI-08}, the authors have compared the expressiveness of \ttappn~
models with strong ($\overline{X-TPN}$, $X \in \{P,T,A\}$ and weak semantics ({$\underline{X-TPN}$}, $X \in \{P,T,A\}$) (see Figure \ref{fig:Exemple}). They have established that\footnote{A Petri net is bounded iff the number of tokens in each reachable marking is bounded. It is safe iff the number of tokens in each reachable marking cannot exceed one.}:% \vspace{-1ex}
\begin{itemize}
\item For the single-server semantics, bounded \ttappn~and safe \ttappn~are equally expressive w.r.t. timed-bisimilarity and then w.r.t. timed language acceptance.
\item \ttpn~and \tppn~are incomparable models.
\item \tapn~ includes all the other models.
 \item The strong semantics includes the weak one for \tppn~  and
  \tapn, but not for \ttpn.
\end{itemize}
\begin{figure}[ht!]
\centering
%% Use the relevant command to insert your figure file.
%% For example, with the graphicx package use
\includegraphics[width=0.5\textwidth]{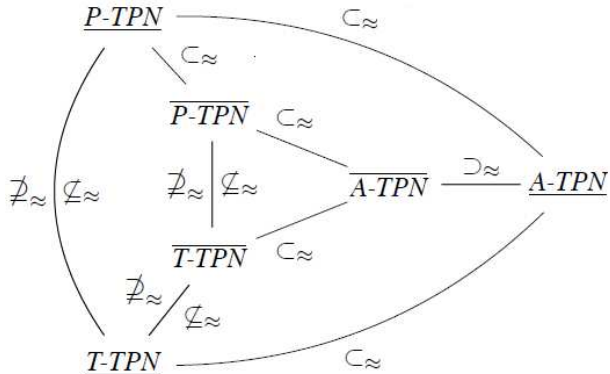}
% \includegraphics[width=3in]{TimePNeps.eps}
%% figure caption is below the figure
\caption{Comparison of the expressiveness of \{P,T,A\}-TPNs given in \cite{boyer-FI-08}}
\label{fig:Exemple}       % Give a unique label
\end{figure}
The reachability analysis of \ttappn~is, in general, based on abstractions preserving properties of interest (markings or linear properties). In general, in the abstractions preserving linear properties, we distinguish three levels of abstraction. In the first level, states reachable by time progression may be
either represented or abstracted. In the second level, states reachable by
the same sequence of transitions independently of their firing times are
agglomerated in the same node. In the third level, the
agglomerated states are considered modulo some equivalence relation:
 the firing domain of the state class graph (SCG) \cite{BVer03}, the
bisimulation relation over the SCG of the contracted state class graph (CSCG) \cite{acsd07}, the approximations of the zone based graph (ZBG) \cite{Bou09}). An abstract state is then an equivalence class of this relation. Usually, all states within an abstract state share the same marking and the union of their time domains is convex and defined as a conjunction of atomic constraints\footnote{An atomic constraint is of the form $x-y \leq c$, $x \leq c$ or $-x \leq c$, where $x$, $y$ are real valued variables representing clocks or delays, $c \in \mathbb{Q} \cup \{\infty\}$ and $\mathbb{Q}$ is the set of rational numbers (for economy of notation, we use operator $\leq$ even if
$c=\infty$).}. From the practical point of view, the Difference Bound Matrices (DBMs) are a useful data structure for representing and handling efficiently sets of atomic constraints \cite{Bouyer06}.
\par The classical forward reachability analysis consists of computing, on-the-fly, all abstract states that are reachable from the initial abstract state. The reachability problem is known to be decidable for bounded \ttappn~but the reachability analysis suffers from the state explosion problem. For timed models, this problem is accentuated by the fact that, in the state space abstraction, a node represents, in fact, a finite/infinite set of states (abstract state) and interleavings of concurrent transitions lead, in general, to different abstract states.
\par To attenuate the state explosion problem, the reachability analysis is usually based on an abstraction by inclusion or by convex-union. During the construction of an abstraction, each newly computed abstract state is compared with the previously computed ones. In the abstractions by inclusion, two abstract states, with the same marking, having domains such that one is included in the other are grouped into one node. In the abstractions by convex-union, two abstract states, with the same marking, having domains such that their union is convex (and then can be represented by a single DBM), are grouped into one node. Convex-union abstractions are more compact than inclusion abstractions \cite{HadjBouc-STTT-08}. However, it is known that DBMs are not closed under union and the convex-union test is a very expensive operation relatively to the test of inclusion \cite{HadjBouc-STTT-08}. The convex-union
test of $n$ (with $n>1$) abstract states $\alpha_1=(M,D_1), \alpha_2=(M,D_2),...
\alpha_n=(M,D_n)$ involves computing the smallest enclosing DBM $\alpha=(M,D)$ of their union, the
difference between $D$ and $D_1, D_2,... D_{n-1}$,
and finally checking that this difference is included in $D_n$.
\par Another interesting reachability analysis approach, proposed in \cite{Maler06} for a CSS-like parallel composition of timed automata, consists of computing abstract states in breadth-first manner and at each level grouping, in one abstract state, all abstract states reached by different interleavings of the same set of concurrent transitions. The authors have shown that this union is convex, and then does not need any test of convexity. To use this approach in the context of \ttappn, we need to show that the union of abstract states reached by different interleavings of the same set of transitions is convex. In \cite{infinity08}, the authors have shown that for the T-TPN model, this union is not necessarily convex in the SCG and the CSCG. This paper shows that for the \tppn, this union is not necessarily convex in the SCG but is convex in the CSCG. Finally, it shows that these results are also valid for the A-TPN model.
% \section{Related work:}  In \cite{infinity08},
% the authors have considered the T-TPN model and the contracted state
% class graph (a state space abstraction) \cite{acsd07} and shown that the union of state classes reached by different interleavings of the same set of transitions is not
% necessarily convex. They have established some sufficient conditions
% which ensure convexity and a procedure to compute this union
% without computing beforehand intermediate state classes. They have used
% this result to propose an extension of the covering steps graph
% technique \cite{Rib02} more appropriate to \ttpn. The
% basic idea of this extension is to group, into an atomic step, the
% firing of some set of transitions such that the union of abstract
% states reached by their different interleavings is convex.
% In this paper, we consider safe \tppn~ and the safe \tapn~ in the context of strong semantics. We show that, for \tppn, the union of state classes reached by different interleavings of the same set of transitions is convex and can be encoded by means of a DBM. This result is also valid for \tapn~ but not valid for \ttpn.
\par The next section is devoted to the \tppn~
model, its semantics, its SCG, its CSCG, and the proof that the union of abstract states (i.e., state classes) reached by different interleavings of the same set of transitions is not necessarily convex in the SCG but is convex in the CSCG. Moreover, this union can be computed directly without computing beforehand intermediate state classes. Section 3 extends the results shown in Section 2 to the \tapn~model. Section 4 contains concluding remarks.

\section{P-Time Petri Nets}
In this paper, for reasons of clarity, we consider safe P-Time Petri nets.

\subsection{Definition and behavior} \label{semantics}
A P-Time Petri net is a Petri net augmented with time intervals
associated with places. Formally, a P-TPN is a tuple \((
P, T, Pre, Post, M_{0}, Isp) \) where: \begin{enumerate} \item \(P=\{p_1,...,p_m\}\) and \(T=\{t_1,...,t_n\}\) are nonempty and finite sets of places and transitions such that (\(P \cap T =
\emptyset\)), \item \(Pre\) and \(Post\) map each transition to its preset and postset ($Pre$, $Post:$ \( T
\longrightarrow 2^P, Pre(t_i) = {^{\circ} t_i} \subseteq P,Post(t_i) = { t_i ^{\circ}} \subseteq P\)),
\item \(M_{0}\) is the initial marking (\(
M_{0} \subseteq P\)), \item $Isp$ is the static
residence interval function \((Isp: P \rightarrow  \mathbb{Q}^{+}\times(\mathbb{Q}^{+}\cup
\{\infty\}))\), \(\mathbb{Q}^{+}\) is the set of nonnegative rational
numbers. \(Isp(p_i)\) specifies the lower \({\downarrow Isp(p_i)}\)
and the upper \({\uparrow Isp(p_i)}\) bounds of the static residence
interval in place \(p_i\). \end{enumerate}

\par Let \(M \subseteq P\) be a marking and \(t_i\) a transition
of $T$. Transition \(t_i\) is enabled for \(M\) iff all required
tokens for firing $t_i$ are present in \(M\), i.e., \( Pre(t_i) \subseteq M\). The firing of $t_i$ from $M$ leads to the marking \( M'= (M - Pre(t_i)) \cup Post(t_i)\). The set of transitions enabled for \(M\) is denoted \(En(M)\),
i.e., \(En(M) =
\{t_i \in T \ | \  Pre(t_i) \subseteq M\}\). A transition  $t_k \in En(M)$ is in conflict with $t_i$ in $M$ iff $Pre(t_k) \cap Pre(t_i) \neq \emptyset$. The firing of $t_i$ will disable $t_k$.
% Let  $M'$ be the successor marking of $M$ by $t_i$. We denote \(Nw(M,t_{i})\) the set of
% transitions enabled by firing $t_i$ from \(M\), i.e.,\\
% $Nw(M,t_{i})=\{t_k \in En(M') \ | \ t_k= t_i \ \vee \ \exists p \in P,
% M(p)-Pre(p,t_{i}) < Pre(p,t_k)\}$\\
%\indent If \(M\) results from firing some transition \(t_{f}\)
%from some marking, \(New(M,t_{f})\) denotes the set of all
%transitions newly enabled in \(M\), i.e.,\\ \(New(M,t_{f}) =\{t_i
%\in En(M) \ | \ t_i= t_f \vee \exists p \in P,
%M(p)-Post(p,t_{f}) < Pre(p,t_i)\}\).\\
%We denote $A(M) = \{p \in M | \exists t \in En(M), p \in Pre(t)\}$ the set of active tokens in $M$.\\
\par In this model, a token may die. A token of place $p$ dies when its interval becomes empty. Dead tokens will never be used and are considered as modeling flaws that should be avoided. To detect the dead tokens, we add a special transition named $Err$ whose role is limited to die tokens.
\par The P-TPN state is defined as a triplet \(s=(M,Deadp, Ip)\), where \(M \subseteq P\) is a
marking, $Deadp \subseteq M$ is the set of dead tokens in $M$ and \(Ip\) is the residence interval function \((Ip: M-Deadp
\rightarrow \mathbb{Q}^{+}\times(\mathbb{Q}^{+}\cup \{\infty\}))\). The initial state of the P-TPN
model is \(s_{0}=(M_{0}, Deadp_0, Ip_{0})\) where $Deadp_0=\emptyset$, \(Ip_{0}(p_i) = Isp(p_i)\),
for all \(p_i \in M_{0}\). When a token is created in place \(p_i\), its residence interval is set to its static residence interval \(Isp(p_i)\). The bounds of this interval decrease synchronously with time, until the token of \(p_i\) is consumed or dies. A transition
\(t_i\) can fire iff all its input tokens are available, i.e., the lower bounds of their residence intervals have reached
\(0\), but must fire, without any additional delay, if the upper
bound of, at least, one of its input tokens reaches \(0\). The firing of a transition takes no time.
% In the rest of the paper, we suppose that all states of the P-TPN are timelock free. Timelocks are considered as modeling flaws that should be avoided. \\
\par We define the P-TPN semantics as follows: Let \(s=(M,Deadp,Ip)\) and \(s'=(M',Deadp', Ip')\) be two states of a P-TPN, \( d \in \mathbb{R^+} \) a nonnegative real number and $t_f \in T$ a transition of the net.\\ - \ We write \(s
\overset{d}\rightarrow s'\), also denoted \(s+d\), iff the state
\(s'\)  is reachable from  state \(s\) by a time progression of
\(d\) units, i.e., \(\forall p_i \in M - Deadp, \
d  \leq  {\uparrow Ip(p_i)}\), \ \(M'= M\), \ $Deadp'=Deadp$, and
\(\forall p_j \in M'-Deadp'\), \(Ip'(p_j) = [Max(0, \downarrow Ip(p_j) - d), {\uparrow Ip(p_j)} - d]\).
The time progression is allowed while we do not overpass residence intervals of all non dead tokens. No token may die by this time progression.\\
- \ We write \(s \overset{t_f}\rightarrow s'\) iff  state \(s'\) is immediately
reachable from state \(s\) by firing transition \(t_f\), i.e.,
\(Pre(t_f) \subseteq M-Deadp\),\ \ \( \forall p_i \in Pre(t_f),  {\downarrow Ip(p_i)}=0\),
 \(M' = (M - Pre(t_f)) \cup Post(t_f)\),\ $Deadp'=Deadp$,
 and \(\forall p_i \in M'-Deadp'\), $Ip'(p_i) = Isp(p_i),$  if $ ~ p_i \in Post(t_f)$ and $Ip'(p_i) =
 Ip(p_i)$ otherwise.\\
 \ - We write \(s \overset{Err}\rightarrow s'\) iff  state \(s'\) is immediately
reachable from state \(s\) by firing transition \(Err\). Transition $Err$ is immediately firable from $s$ if there exists no transition firable from $s$ and there is, at least, a token in $M-Deadp$ s.t. the upper bound of its interval has reached $0$ (token to die) i.e., \((\forall t_k \in En(M-Deadp), \exists p_j \in Pre(t_k),   {\downarrow Ip(p_j)}>0 ) \), \((\exists p_i \in M-Deadp, {\uparrow Ip(p_i)} = 0)\),
 \(M' = M\), $Deadp'=Deadp \cup \{ p_j \in M-Deadp | {\uparrow Ip(p_j)}=0 \}$,
 and (\(\forall p_i \in M' - Deadp'\), $Ip'(p_i) = Ip(p_i)$).
\par According with the above semantics, states from which transition $Err$ is firable, are timelock states\footnote{A state $s$ is a timelock state iff no progression of time is possible and no transition is firable from $s$.}. Therefore, transition $Err$ allows to detect timelock states and dead tokens, and also to unblock the time progression.
%  We write
% \(s \overset{Err}\rightarrow s'\) iff no transition of $T$ is immediately firable at state \(s\) and  there is, at least a token in $M$ s.t. its upper bound has reached $0$. immediately
% reachable from state \(s\) by firing transition \(t_f\), i.e.,
% \(Pre(t_f) \subseteq M\),\ \ \( \forall p_i \in Pre(t_f),  {\downarrow I(p_i)}=0\),
%  \(M' = (M - Pre(t_f)) + Post(t_f)\),\
%  and \(\forall p_i \in M'\), $I'(p_i) = Ip(p_i),$  if $ ~ p_i \in Post(t_f)$ and $I'(p_i) =
%  I(p_i)$
%  otherwise.
%Let \(s=(M,I)\) and \(s'=(M',I')\) be two interval states of the TPN
%model, \( d \in \mathbb{R}^+\) and $t_f \in T$. We write \(s
%\overset{d}\rightarrow s'\), also denoted \(s+d\), iff state
%\(s'\)  is reachable from  state \(s\) after a time progression of
%\(d\) time units, i.e.,\(\underset{t_i \in En(M)}\bigwedge d \ \
%\leq \ \ {\uparrow I(t_i)} \), \ \  \(M' = M\), \ and \(\forall
%t_j \in En(M'), I'(t_j) = [Max({\downarrow I(t_j)}-d,0),
%{\uparrow I(t_j)} - d]\). We write \(s \overset{t_f}\rightarrow
%s'\) iff state \(s'\) is immediately reachable from  state \(s\)
%by firing transition \(t_f\), i.e., \(t_f \in En(M)\),\ \
%\({\downarrow I(t_f)} \ = \ 0\),
% \(\forall p \in P, M'(p) = M(p) - Pre(p,t_f) + Post(p,t_f)\),\
% and \(\forall t_i \in En(M')\), $I'(t_i) = Is(t_i)$ if $t_i \in Nw(M,t_f)$,
% $I'(t_i) = I(t_i)$ otherwise.\\
\par The P-TPN state space is the timed transition system \((S, \rightarrow,
s_{0})\), where \(s_{0}\) is the initial state of
the P-TPN and \(S = \{s \ | \ s_{0}\overset{*}\rightarrow s \}\)
 is the set of reachable states of the model, \(\overset{*}\rightarrow\) being the reflexive and transitive closure  of the relation \(\rightarrow\) defined above.\\  A \emph{run} in the  P-TPN state space \((S,\rightarrow, s_{0} ) \), starting from a state \(s\), is a maximal sequence \(\rho = s_1  \overset{d_1}\rightarrow s_1 + d_1 \overset{t_1}\rightarrow  s_{2} \overset{d_2}\rightarrow.....\), such that \(s_1=s\). By convention, for any state $s_i$, relation \(s_i \overset{0}\rightarrow s_i\) holds.
 The sequence $d_1 t_1 d_2 t_2 ...$ is called the timed trace of $\rho$. The sequence $t_1t_2....$ is called the untimed trace of $\rho$. Runs of the P-TPN are all runs starting from the initial state \(s_0\). Its timed (resp. untimed) traces are timed (resp. untimed) traces of its initial state.
% Let \((S_1, \rightarrow_1,
% s_{0_1})\) and \((S_2, \rightarrow_2,
% s_{0_2})\) be two timed transition  systems and $\simeq \in S_1 \times S_2$ a relation.
% $\simeq$ is a timed bisimulation relation if: \begin{itemize}
% \item $s_{0_1} \simeq s_{0_2}$;
% \item $\forall (s_1, s_2) \in S_1 \times S_2 \ \text{s.t.} \ s_1 \simeq s_2, \forall x \in T \cup \mathbb{R^+},$\\
% $\forall s_1' \in S_1 \ \text{s.t.} \ s_1 \overset{x}\rightarrow s_1', \exists s_2' \in S_2, s_2 \overset{x}\rightarrow s_2'  \wedge s_1' \simeq s_2'$  and \\
% $\forall s_2' \in S_2 \ \text{s.t.} \ s_2 \overset{x}\rightarrow s_2', \exists s_1' \in S_1, s_1 \overset{x}\rightarrow s_1'  \wedge s_2' \simeq s_1'$.
% \end{itemize}

\subsection{The SCG and CSCG of P-TPN}
%\subsection{P-TPN state space abstractions} \label{Sec:abs}
% Enumerative analysis methods of the timed models are usually based on the
% state space abstractions preserving properties of interest
% (markings, linear or branching properties). In general, in the abstractions preserving linear properties (untimed traces), all runs supporting the same untimed trace are agglomerated and represented by their untimed trace. Therefore, all states reachable by
% the same firing sequence independently of their firing times are
% agglomerated in the same node. The
% agglomerated states are then considered modulo some relation of
% equivalence.
%In the state class method, two sets of states are equivalent iff they share the same marking and the union of their intervals \\
The SCG of P-TPN is defined in a similar way as the SCG of T-TPN, except that time constraints are associated with places, and tokens may die. A SCG state class is defined as a triplet $\alpha=(M,Deadp,\phi_p)$ where $M \subseteq P$,  $Deadp \subseteq M$ is the set of dead tokens in $M$ and $\phi_p$ is a conjunction of atomic constraints\footnote{An atomic constraint is of the form $x-y \leq c, x \leq c, -y \leq c$, where $x$, $y$ are
real valued variables, $c \in \mathbb{Q} \cup
\{\infty\}$ and $\mathbb{Q}$ is the set of rational numbers (for economy of
notation, we use operator $\leq$ even if $c=\infty$).}
characterizing the union of the residence intervals of its non dead tokens. Each place $p_i$ of \(M-Deadp\) has  a variable denoted $\underline{p}_i$ in \(\phi_p\) representing
the residence delay of its token (i.e., the waiting time before its consummation or its death).
% Though the same domain may be expressed by different conjunctions of atomic constraints, all
% equivalent formulas have a unique form, called canonical
% form defined by:$$\underset{p_i,p_j \in M \cup \{p_0\} }\bigwedge \underline{p}_i-\underline{p}_j \leq
% Sup_{F}(\underline{p}_i-\underline{p}_j),$$ where $p_0 \notin P$ is a variable whose value is fixed at $0$, $Sup_{F}(\underline{p}_i-\underline{p}_j)$ is the largest value of $\underline{p}_i-\underline{p}_j$ in the domain of $F$.
\par From the practical point of view, $\phi_p$ is represented by a Difference Bound Matrix (DBM). The DBM of $\phi_p$ is a square matrix $D$ of order $|M-Deadp|+1$, indexed by variables of $\phi_p$ and a special variable $\underline{p}_0$ whose value is fixed at $0$. Each entry $d_{ij}$ represents the atomic constraint $\underline{p}_i-\underline{p}_j \leq d_{ij}$. Hence, entries $d_{i0}$ and $d_{0j}$ represent simple atomic constraints $\underline{p}_i \leq d_{i0}$ and $- \underline{p}_j \leq d_{0j}$, respectively. If there is no upper bound on $\underline{p}_i - \underline{p}_j$ with $i\neq j$, $d_{ij}$ is set to $\infty$. Entry $d_{ii}$ is set to $0$. Though the same nonempty domain may be represented by different DBMs, they have a unique form called canonical form. The canonical form of a DBM is the representation with tightest bounds on all differences between variables, computed by propagating the effect of each entry through the DBM. It can be computed in $O(n^3)$, $n$ being the number of variables in the DBM, using a shortest path algorithm, like Floyd-Warshall's all-pairs shortest path algorithm \cite{Bouyer06}. Canonical forms make operations over DBMs much simpler \cite{Ben02}.
% The canonical form of $F$ is represented by the DBM \(D =
% (d_{ij})\) of order \(|En(M) \cup \{t_0\}|\) where $t_0$ represents the value $0$.
% defined by: \(\forall (t_i,t_j) \in (En(M)\cup\{t_0\})^{2}\), \ \ \(d_{i j} = Sup_{F}(t_i - t_j)\).\\
\par The initial state class is \(\alpha_0=(M_{0}, Deadp_0, \phi_p{_0})\) where $M_0$ is the initial marking, $Deadp_0=\emptyset$ and $\phi_p{_0} = \underset{p_i
\in M_{0}} \bigwedge {\downarrow Isp(p_i)} \leq \underline{p}_i \leq
{\uparrow Isp(p_i)}$.
% Let \(\alpha=(M, Dead, \phi)\) be a state class and $p_i \in M$ a marked place. The token within place $p_i$ is dead  in \(\alpha\) iff  the domain of $\underline{p}_i$ in $\phi$ consists of negative values.
\par Successor state classes are computed using the following firing rule \cite{BVer03}:
Let \(\alpha=(M, Deadp, \phi_p)\) be a state class and \(t_{f}\) a transition of $T$. The state
class \(\alpha\) has a successor by \(t_{f}\) (i.e.,
\(succ(\alpha, t_f)\neq \emptyset\)) iff
 \(Pre(t_f) \subseteq M -Deadp\) and the following formula is
 consistent\footnote{A formula $\phi$ is consistent iff there is, at least, one tuple of values that satisfies, at once, all constraints of $\phi$.}: $$ \phi_p \wedge (\underset{p_f \in Pre(t_f), p_i \in M -Deadp}\bigwedge \underline{p}_{f} - \underline{p}_i \leq 0).$$
This firing condition means that $t_f$ is enabled in $M-Deadp$ and there is a state s.t. the residence delay of each input token of $t_f$ is less or equal to the residence delays of all non dead tokens in $M$.\\
If \(succ(\alpha, t_f)\neq \emptyset\) then \(succ(\alpha,
 t_f)=(M',Deadp',\phi_p')\) is computed as follows:\begin{enumerate} \item \(M' = (M - Pre(t_{f})) \cup Post(t_{f})\);
 \item $Deadp'=Deadp$;
 \item Set $\phi_p'$ to $ \ \ \phi_p \wedge (\underset{p_f \in Pre(t_f), p_i \in M-Deadp}\bigwedge
\underline{p}_{f} - \underline{p}_i \leq 0) $;\item Rename, in $\phi_p'$, $\underline{p}_f$ in ${\underline{t}_f}$, for all $p_f \in Pre(t_f)$; \item Add constraints: $\underset{p_n \in Post(t_f)}\bigwedge
{\downarrow Isp(p_n)} \leq \underline{p}_n - \underline{t}_f \leq {\uparrow Isp(p_n)}$; \item Replace each variable $\underline{p}_i$ by $\underline{p}_i+\underline{t}_f$ (this substitution actualizes delays (old $\underline{p}_i$ = new $\underline{p}_i + \underline{t}_f$));
\item Eliminate by substitution $\underline{t}_f$.
\end{enumerate}
%\end{proposition}
%\begin{proof}
If $t_f$ is firable then its firing consumes its input tokens and creates a token in each of its output places. Step 2) means that no token may die by firing $t_f$. Step 3) isolates states of $\alpha$ from which $t_f$ is firable. Note that this firing condition implies that $\forall p_f, p_f' \in Pre(t_f), \underline{p}_f = \underline{p}_f'$ and then the firing delay $\underline{t}_f$ of $t_f$ is equal to $\underline{p}_f$. Step 4) renames variables associated with tokens consumed by $t_f$ in $\underline{t}_f$. Step 5) adds constraints of the created tokens. The residence interval of a token created by $t_f$ is relative to the firing date of $t_f$. Step 6) updates the delays of tokens not used by $t_f$. Step 7) eliminates variable $\underline{t}_f$.
 \medskip \medskip
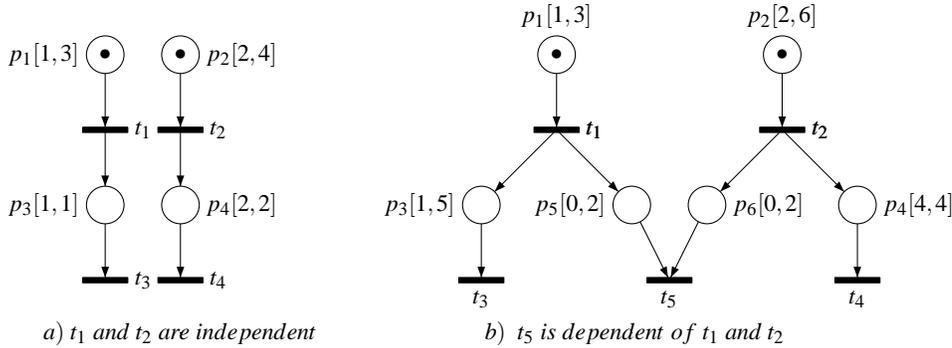
\begin{figure}%[ht!]
  \medskip  \footnotesize
\begin{picture}(140,30)(5,-35)
\centering
% Places
\gasset{Nw=5.0,Nh=5.0,Nmr=4,fillgray=1}
\gasset{ExtNL=y,NLdist=1,NLangle=90} \gasset{NLangle=90}
\node(P1)(30, -5) {} \node(P2)(40, -5) {}
%\node(P4)(30,-25) {$p_5$} \node(P5)(70, -25) {$p_6$}
\gasset{NLangle=30} \node(P3)(30, -25) {}  \node(P4)(40, -25)
{}
% Transitions
\gasset{ExtNL=y,NLdist=1,NLangle=0}
\gasset{Nw=6,Nh=0.7,Nmr=0,fillgray=0}
%\gasset{Nw=0.7,Nh=6,Nmr=0,fillgray=0} \gasset{NLangle=30}
\node(T1)(30,-15) {$t_1$} \node(T2)(40,-15) {$t_2$} \node(T3)(30,-35) {$t_3$}
%\node(T5)(30,-35) {$t_5 [1,1]$} \node(T6)(70,-35) {$t_6 [1,2]$}
\node(T4)(40,-35) {$t_4$} \gasset{curvedepth=0} \drawedge(P1,T1){}
\drawedge(P2,T2){} \drawedge(T1,P3){} \drawedge(T2,P4){}
\drawedge(P3,T3){} \drawedge(P4,T4){} \gasset{NLangle=180}
 \nodelabel(P3){$p_3 [1,1]$}\nodelabel(P1) {$p_1 [1,3]$}
\gasset{NLangle=360} \nodelabel(P2)
{$p_2 [2,4]$}\nodelabel(P4){$p_4 [2,2]$}
%\drawedge(T1,P4){} \drawedge(T2,P5){} \drawedge(P4,T5){}
%\drawedge(P5,T6){}
% Marking
\gasset{ExtNL=n,NLdist=0,NLangle=0} \nodelabel(P1){$\bullet$}
\nodelabel(P2){$\bullet$}
\gasset{Nw=0,Nh=0,Nmr=0,fillgray=0}
\node(a)(40,-42) {$a) \ t_1 \ and \ t_2 \ are \ independent$} \node(b)(100,-42){ $b) \ \ t_5 \ is \ dependent \ of \ t_1 \ and \ t_2$}
%\end{picture} \medskip
%\medskip \caption{Firing sequences $t_1t_2$ and $t_2t_1$  leads to
%state classes which have different firing
% sequences\label{fig:example3}}
%\end{figure}
%\begin{figure} [htbp]\medskip
%\centering
%\begin{picture}(60,20)(5,-30)
% Places
\gasset{Nw=5.0,Nh=5.0,Nmr=4,fillgray=1}
\gasset{ExtNL=y,NLdist=1,NLangle=90} \gasset{NLangle=90}
\node(P0)(90, -5) {$p_1 [1,3]$} \node(P2)(120, -5) {$p_2 [2,6]$}  \gasset{NLangle=180} \node(P1)(80, -25) {$p_3 [1,5]$}\node(P4)(100,-25) {$p_5 [0,2]$}  \gasset{NLangle=360}  \node(P5)(110,-25) {$p_6 [0,2]$} \node(P3)(130, -25) {$p_4 [4,4]$}

% Transitions
\gasset{ExtNL=y,NLdist=1,NLangle=0}
\gasset{Nw=6,Nh=0.7,Nmr=0,fillgray=0}
%\gasset{Nw=0.7,Nh=6,Nmr=0,fillgray=0} \gasset{NLangle=30}
\node(T1)(90,-15) {$t_1$} \node(T2)(120,-15) {$t_2$}
%\gasset{NLangle=90}
\node(T3)(80,-35) {} \node(T5)(105,-35) {} \node(T4)(130,-35) {}
\gasset{curvedepth=0} \drawedge(P0,T1){} \drawedge(P2,T2){}
\drawedge(T1,P1){} \drawedge(T2,P3){} \drawedge(P1,T3){}
\drawedge(P3,T4){} \drawedge(T1,P4){} \drawedge(T2,P5){}
\drawedge(P4,T5){} \drawedge(P5,T5){} \nodelabel(T1) {$t_1$} \nodelabel(T2)
{$t_2$} \gasset{NLangle=270} \nodelabel(T3){$t_3$}
\nodelabel(T5){$t_5$} \nodelabel(T4){$t_4$}
% Marking
\gasset{ExtNL=n,NLdist=0,NLangle=0} \nodelabel(P0){$\bullet$}
\nodelabel(P2){$\bullet$}
\end{picture} \medskip \medskip
\caption{P-TPNs used to illustrate features of the
interleaving in the SCG and the CSCG}  \normalsize \vspace{-5mm}
\end{figure}
\vspace{-3mm}
\par For example, consider the P-TPN shown in Figure 2.a). From its
initial SCG state class $\alpha_0=(p_1+p_2, \emptyset, 1 \leq \underline{p}_1 \leq 3 \
\wedge \ 2 \leq \underline{p}_2 \leq 4)$, transition $t_1$ is firable from $\alpha_0$, since  $1 \leq \underline{p}_1 \leq 3 \
\wedge \ 2 \leq \underline{p}_2 \leq 4 \wedge \underline{p}_{1} - \underline{p}_2 \leq 0$ is consistent. The firing of $t_1$ leads to the state class $(p_2+p_3, \emptyset, 0 \leq \underline{p}_2
\leq 3 \wedge \underline{p}_3=1)$. Its formula is derived from the firing condition of $t_1$ from $\alpha_0$ as follows: rename $\underline{p}_1$ in ${\underline{t}_1}$, add the constraint $1 \leq \underline{p}_3 - \underline{t}_1 \leq 1$, replace $\underline{p}_2$ and $\underline{p}_3$ by $\underline{p}_2+\underline{t}_1$ and $\underline{p}_3+\underline{t}_1$, respectively, and finally eliminate by substitution $\underline{t}_1$.
\par The transition $Err$ is firable from $\alpha=(M,Deadp,\phi_p)$ iff there is no possibility to reach the intervals of input places of any enabled transition without overpassing the interval of a non dead token, i.e., $\exists p_i \in M-Deadp, \ \text{s.t.} \ \forall t_f \in En(M-Deadp),$   $\phi_p \wedge (\underset{p_f \in Pre(t_f)}\bigwedge \underline{p}_{f} - \underline{p}_i \leq 0)$ is not consistent.\\  If $Err$ is firable from $\alpha$ (i.e., $succ(\alpha,Err)\neq \emptyset$), its firing leads to the state class $\alpha'=succ(\alpha, Err)=(M',Deadp',\phi_p')$ where: $M'=M$, $Deadp'=Deadp \cup \{ p_i \in M-Deadp | \forall t_f \in En(M-Deadp)$,  $\phi_p \wedge (\underset{p_f \in Pre(t_f)}\bigwedge \underline{p}_{f} - \underline{p}_i \leq 0)$ is not consistent $\}$, $\phi_p'$ is obtained from $\phi_p$ by eliminating by substitution all variables associated with places of $Deadp' - Deadp$ (i.e., by putting $\phi_p$ in canonical form and eliminating all variables associated with places of $Deadp'-Deadp$).
\par Let $\alpha$, $\alpha'$ be two state classes and $X \in T \cup \{ Err\}$ a
transition. We write $\alpha \overset{X}\longrightarrow \alpha'$
iff $succ(\alpha, X)\neq \emptyset \wedge \alpha'=succ(\alpha, X)$. The SCG of the
\emph{P-TPN} is the structure $(\mathcal{C}, \longrightarrow,
\alpha_0)$ where $\alpha_0$ is the initial state class and
$\mathcal{C} = \{ \alpha | \alpha_0 \overset{*}\longrightarrow
\alpha\}$ is the set of reachable state classes. %A \emph{run} in
\par Note that dead tokens have no effect on the future behavior. Therefore, we can abstract dead tokens when we compare state classes. Two state classes $\alpha=(M,Deadp,\phi_p)$ and $\alpha'=(M',Deadp',\phi_p')$ are said to be equal iff they have the same set of non dead tokens (i.e., $M-Deadp = M'-Deadp'$) and the DBMs of their formulas have the same canonical form (i.e., $\phi_p\equiv \phi_p'$).
\par In the same way as for the SCG of T-TPN \cite{BVer03}, we can prove that the SCG of P-TPN is finite and  preserves linear properties.
% \par To achieve more contractions, we can propose a relation
% of equivalence over the SCG state classes which induces a space reduction while preserving linear properties of the model. This relation is somewhat similar to the one proposed in \cite{acsd07} for the SCG of the \ttpn.
\par According to the firing rule given above, simple atomic constraints (i.e., atomic constraints of the form $\underline{p}_i \leq c$ or $-\underline{p}_i \leq c$) are not necessary to compute the successor state classes. It follows that all classes with the same triangular atomic constraints (i.e., atomic constraints of the form $\underline{p}_i-\underline{p}_j \leq c$) have the same firing sequences. They can be agglomerated into one node while preserving linear
properties of the model. This kind of agglomeration has been successfully used in \cite{acsd07} for the SCG of the \ttpn.
\par Formally, we define a bisimulation relation, denoted $\simeq$, over the SCG of the \tppn~by: $\forall \alpha=(M,Deadp,\phi_p),
\alpha'=(M', Deadp', \phi_p') \in \mathcal{C}$, let $D$ and $D'$ be the DBMs in canonical form of $\phi_p$ and $\phi_p'$, respectively,  $(M, Deadp,\phi_p) \simeq
(M', Deadp', \phi_p')$ iff $M - Deadp = M' -Deadp'$ \ and \ $\forall p_i, p_j \in M-Deadp, d_{ij} = d_{ij}'$.
\par The CSCG of the \tppn~is the quotient graph of the SCG w.r.t. $\simeq$. A CSCG state class is an equivalence class of $\simeq$. It is defined as a triplet $\beta=(M, Deadp, \psi_p)$, where $\psi_p$ is a conjunction of triangular atomic constraints. The initial CSCG state class is \(\beta_0=(M_{0}, Deadp_0,\psi_p{_0})\) where $M_0$ is the initial marking, $Deadp_0=\emptyset$ and $\psi_p{_0} = \underset{p_i, p_j
\in M_{0}} \bigwedge { \underline{p}_i - \underline{p}_j \leq
{\uparrow Isp(p_i)} - \downarrow Isp(p_j)}$.
\par The CSCG state classes are computed in the same manner as the SCG state classes, except that step 6), of the firing rule given above, is not needed because the substitution of each $\underline{p}_i$ by $\underline{p}_i+\underline{t}_f$ has no effect on triangular atomic constraints ($(\underline{p}_i + \underline{t}_f) - (\underline{p}_j + \underline{t}_f)= \underline{p}_i - \underline{p}_j$). Steps 6) and 7) are replaced by:
% Let \(\beta=(M, Deadp, \psi_p)\) be a CSCG state class and \(t_{f}\) a transition of $T$. \label{firing} \(succ(\beta, t_f ) \neq \emptyset\) iff
%  \(Pre(t_f) \subseteq M-Deadp\) and the following formula is
%  consistent: $$ \psi_p \wedge (\underset{p_f \in Pre(t_f), p_i \in M-Deadp}\bigwedge \underline{p}_{f} - \underline{p}_i \leq 0).$$
% If \(succ(\beta, t_f)\neq \emptyset\) then \(succ(\beta, t_f)=(M', Deadp', \psi_p')\) is computed as follows:\begin{enumerate} \item \(M' = (M - Pre(t_{f})) \cup Post(t_{f})\);
% \item $Deadp'=Deadp$;
%  \item Set $\psi_p'$ to $\psi_p \wedge (\underset{p_f \in Pre(t_f), p_i \in M-Deadp}\bigwedge
% \underline{p}_{f} - \underline{p}_i \leq 0) $;\item Rename, in $\psi_p'$, $\underline{p}_f$ in ${\underline{t}_f}$ for all $p_f \in Pre(t_f)$ ; \item
%  Add constraints: $\underset{p_n \in Post(t_f)}\bigwedge
% {\downarrow Isp(p_n)} \leq \underline{p}_n - {\underline{t}_f} \leq {\uparrow Isp(p_n)}$;
Put the resulting formula  in canonical form and then eliminate all constraints containing $\underline{t}_f$.
%\end{enumerate}
%Intuitively, the idea behind this relation is to eliminate
%information not needed to compute successor state classes.
%According with the firing rule given above, simple
%constraints\footnote{Simple constraints are of the form: $-x \leq
%c$ or $x \leq c$ where $x$ is a real-valued variable representing
%a clock or a delay, and $c\in \mathbb{Q} \cup \{\infty\}$.} are not
%relevant for computing successor state classes. Moreover, since
%the firing of a transition $t_f$ will disable all transitions
%conflicting with $t_f$, some time constraints between conflicting
%transitions are not relevant for computing successor state
%classes. The Contracted State Class Graph (CSCG) is the quotient
%graph of the SCG w.r.t. $\simeq$.\\
%
%%In \cite{bh09}, the authors proposed a relaxed version of the SCG
%%(RSCG). The construction proceeds exactly as for the SCG, despite
%%the fact that state classes are \emph{relaxed} each time they are
%%computed (including the initial state class).
%%  The \emph{relaxation} of a state
%%class consists in extending it with states reachable via time
%%progressions. Knowing that delays decrease with time, the
%%relaxation consists in replacing lower bounds of delay intervals
%%with 0. Let \(\alpha=(m,F)\) be a state class in canonical form.
%%The relaxation of
%% \(\alpha\), denoted by \(relax(\alpha)\), is the class
%%\(\alpha'= (m,F')\) computed according to   algorithm
%%\ref{algo_relax}.
\subsection{Interleaving in the P-TPN state class graph}
Note that transition $Err$, used to detect timelock states and dead tokens, cannot be concurrent to any transition of $T$. So, there is no interleaving between $Err$ and transitions of $T$.
\par Let us first show, by means of a counterexample, that the union of the SCG state classes of a P-TPN, reached by different interleavings of the same set of transitions of $T$, is not generally convex.
\par Consider the P-TPN shown in Figure 2.a). From its
initial SCG state class $\alpha_0=(p_1+p_2, \emptyset, 1 \leq \underline{p}_1 \leq 3 \
\wedge \ 2 \leq \underline{p}_2 \leq 4)$, sequences $t_1t_2$ and $t_2t_1$ lead
respectively to the SCG state classes: \\ $\alpha_1 = (p_3+p_4, \emptyset, 0 \leq \underline{p}_3
\leq 1 \wedge \underline{p}_4=2 \wedge -2 \leq \underline{p}_3-\underline{p}_4 \leq -1)$ and \\ $\alpha_2
= (p_3+p_4, \emptyset, \underline{p}_3=1 \wedge 1 \leq \underline{p}_4 \leq 2 \wedge -1 \leq \underline{p}_3-\underline{p}_4 \leq 0)$.\\ The union of domains of $\alpha_1$ and $\alpha_2$ is obviously not convex.
\par Consider now the CSCG of the same net. From its initial CSCG state class $\beta_0=(p_1+p_2, \emptyset, -3 \leq \underline{p}_1 - \underline{p}_2 \leq 1)$, sequences $t_1t_2$ and $t_2t_1$ lead to the CSCG state classes:\\ $\beta_1 = (p_3+p_4, \emptyset, -2 \leq \underline{p}_3-\underline{p}_4 \leq -1)$ and $\beta_2 = (p_3+p_4, \emptyset, -1 \leq \underline{p}_3-\underline{p}_4 \leq 0)$, respectively.\\ The union of domains of $\beta_1$ and $\beta_2$ is convex $(-2 \leq \underline{p}_3-\underline{p}_4 \leq 0)$.
\par We will show, in the following, that this result is always valid for the union of all the CSCG state classes reached by different interleavings of the same set of transitions. Let us first establish the firing condition of a sequence of concurrent transitions.

\begin{proposition} \label{prop}
Let $\beta = (M, Deadp, \psi_p)$ be a CSCG state class, and $T_m \subseteq T$ a set of transitions enabled and not in conflict in $M-Deadp$, $\Omega(T_m)$ the set of all interleavings of transitions of $T_m$ and $\omega=t_1t_2...t_m \in \Omega(T_m)$.
The successor of $\beta$ by $\omega$ is non empty (i.e., $succ(\beta,\omega)\neq \emptyset$) \footnote{$succ(\beta,\omega)$ is the set of all states reachable from any state of $\beta$ by a timed run supporting $\omega$.} iff the following formula, denoted $\varphi_p$, is consistent:
$$\psi_p \ \wedge \  \underline{t}_1 \leq \underline{t}_2 \leq ... \leq \underline{t}_m \ \ \wedge \ \ $$ $$ \underset{f \in [1,m]} \bigwedge \ [ \ \ \ \ \ \ \ \ \ \ \ \ \ \ \ \ \ \ \ \ \ \ \ \  \underset{p_i \in Pre(t_{f})} \bigwedge  \underline{p}_i = \underline{t}_f \ \ \wedge \ \ \underset{p_j \in (M-Deadp) - \underset{l \in [1,f[} \bigcup Pre(t_{l})} \bigwedge   \  \underline{t}_{f} - \underline{p}_{j} \leq 0  \  \wedge  \  $$
$$ \underset{ k \in[1,f[, p_n \in Post(t_{k})} \bigwedge \underline{t}_{f} - \underline{p}_{n}^k \leq 0
 \ \ \wedge \ \ \underset{p_n \in Post(t_{f})} \bigwedge {\downarrow Isp(p_{n})} \leq \underline{p}_{n}^f - \underline{t}_{f} \leq
{\uparrow Isp(p_{n})} \ \ ]$$
\end{proposition}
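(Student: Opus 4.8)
\medskip\noindent\textbf{Proof plan.} The plan is to prove, by induction on the length $k\in\{0,\dots,m\}$ of the prefix $\omega_k=t_1\cdots t_k$ of $\omega$, a statement slightly stronger than needed, applying at each step the single firing rule of the CSCG. Recall that, by soundness of the CSCG (established above), $succ(\beta,\omega)$ is non-empty iff the iterated single-step successor $succ(\cdots succ(\beta,t_1)\cdots,t_m)$ is, i.e.\ iff each successive firing condition — a consistency test — holds. Write $\varphi_p^{(k)}$ for the sub-formula of $\varphi_p$ keeping $\psi_p$, the chain $\underline{t}_1\le\cdots\le\underline{t}_k$ and the bracketed conjuncts indexed by $f\in[1,k]$ (hence only the auxiliary variables $\underline{t}_1,\dots,\underline{t}_k$ and the created-token variables $\underline{p}_n^j$ with $j\le k$), and let $M_k=(M\setminus\bigcup_{l\le k}Pre(t_l))\cup\bigcup_{l\le k}Post(t_l)$, which is the marking reached after \emph{any} interleaving of $t_1,\dots,t_k$ because these transitions are pairwise non-conflicting. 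I will show: (i) $succ(\beta,\omega_k)\neq\emptyset$ iff $\varphi_p^{(k)}$ is consistent; and (ii) when non-empty, $succ(\beta,\omega_k)=(M_k,Deadp,\psi_p^{(k)})$, where $\psi_p^{(k)}$ is the \emph{exact projection} of $\varphi_p^{(k)}$ onto the variables naming the tokens of $M_k\setminus Deadp$ (equivalently: put $\varphi_p^{(k)}$ in canonical form and delete every constraint mentioning an already-eliminated variable). The Proposition is case $k=m$ of (i).

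The base case $k=0$ is immediate. For $k\to k+1$, the only non-trivial situation is $\varphi_p^{(k)}$ consistent, where by the induction hypothesis $\beta_k:=succ(\beta,\omega_k)=(M_k,Deadp,\psi_p^{(k)})$, and I apply the CSCG single-step rule to $\beta_k$ and $t_{k+1}$. First, using safeness of the net and pairwise non-conflict, $Pre(t_{k+1})$ is disjoint from $\bigcup_{l\le k}(Pre(t_l)\cup Post(t_l))$, so its input tokens still carry their original variables $\underline{p}_i$ and $Post(t_{k+1})\cap M_k=\emptyset$. Unfolding steps 3)--5) of the firing rule (renaming the $Pre(t_{k+1})$-variables to $\underline{t}_{k+1}$, which yields the equalities $\underline{p}_i=\underline{t}_{k+1}$; the isolation constraint, which over $M_k\setminus Deadp$ splits, by (ii), into $\underline{t}_{k+1}-\underline{p}_j\le 0$ over the original survivors and $\underline{t}_{k+1}-\underline{p}_n^j\le 0$ over the tokens created by $t_1,\dots,t_k$; and the fresh-interval constraints $\downarrow Isp(p_n)\le\underline{p}_n^{k+1}-\underline{t}_{k+1}\le\uparrow Isp(p_n)$ for $p_n\in Post(t_{k+1})$), and discarding the trivially true conjuncts $\underline{t}_{k+1}-\underline{p}_i\le 0$ for $p_i\in Pre(t_{k+1})$, one sees the added formula is exactly the bracketed conjunct $B_{k+1}$ of $\varphi_p$ indexed by $f=k+1$. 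Hence $succ(\beta_k,t_{k+1})\neq\emptyset$ iff $\Theta:=\psi_p^{(k)}\wedge B_{k+1}$ is consistent, and then $succ(\beta_k,t_{k+1})=(M_{k+1},Deadp,\Pi\Theta)$ with $\Pi\Theta$ the projection of $\Theta$ onto the token variables of $M_{k+1}\setminus Deadp$.

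The crux is then to show that $\Theta$ is consistent iff $\varphi_p^{(k+1)}$ is, with matching projections. One direction is immediate: $\varphi_p^{(k+1)}=\varphi_p^{(k)}\wedge(\underline{t}_k\le\underline{t}_{k+1})\wedge B_{k+1}$ entails $\psi_p^{(k)}$ (a consequence of $\varphi_p^{(k)}$, by (ii)) and $B_{k+1}$, hence entails $\Theta$. For the converse, a model of $\Theta$ restricts, on the token variables of $M_k\setminus Deadp$, to a model of $\psi_p^{(k)}$; since $\psi_p^{(k)}$ is by (ii) the \emph{exact} projection of $\varphi_p^{(k)}$, this restriction extends to a model of $\varphi_p^{(k)}$, and gluing it with the values the original model gives to the fresh variables $\underline{t}_{k+1},\underline{p}_n^{k+1}$ (the two assignments agree on all shared variables) yields a model of $\varphi_p^{(k)}\wedge B_{k+1}$; one then checks it can be taken to satisfy $\underline{t}_k\le\underline{t}_{k+1}$ as well — directly if $Pre(t_{k+1})\neq\emptyset$, for then $B_{k+1}$ pins $\underline{t}_{k+1}$ to an input-token delay $\underline{p}_i$ which the $k$-th bracket of $\varphi_p^{(k)}$ bounds below by $\underline{t}_k$, and in general because $\varphi_p^{(k)}$ forces $\underline{t}_k$ below every token delay of $M_k\setminus Deadp$, leaving room to raise $\underline{t}_{k+1}$ (and the $\underline{p}_n^{k+1}$ accordingly) up to $\underline{t}_k$. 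Thus $\Theta$ consistent $\Rightarrow$ $\varphi_p^{(k+1)}$ consistent, and the same gluing shows $\Pi\Theta=\psi_p^{(k+1)}$, closing the induction.

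I expect the main obstacle to be the converse direction above, namely re-introducing the auxiliary variables $\underline{t}_1,\dots,\underline{t}_k$ eliminated along the way. It works only because, in the CSCG, the firing rule introduces and retains exclusively triangular constraints (the constraints added in steps 3) and 5) — $\underline{p}_f-\underline{p}_i\le 0$, $\underline{p}_i=\underline{t}_f$ and $\downarrow Isp(p_n)\le\underline{p}_n-\underline{t}_f\le\uparrow Isp(p_n)$ — are all of the form $x-y\le c$), so that ``put in canonical form and delete the constraints on the eliminated variable'' coincides with exact existential projection, whence $\psi_p^{(k)}$ really is the projection of $\varphi_p^{(k)}$. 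For the SCG this fails, because simple constraints $\underline{p}_i\le c$ are present and can be lost under elimination — which is exactly why the statement is claimed only for the CSCG. A secondary point to get right is that the ordering chain $\underline{t}_1\le\cdots\le\underline{t}_m$, although written out in $\varphi_p$, is not imposed by the single-step rule and must be recovered from the isolation conditions, as sketched above.
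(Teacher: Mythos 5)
Your proof is correct, but it takes a genuinely different (and more rigorous) route than the paper's. The paper's proof of Proposition~\ref{prop} is essentially a direct transcription of the sequence-firing semantics: it introduces one delay variable $\underline{t}_f$ per transition and \emph{describes} the five blocks of constraints (ordering, equality with input-token delays, non-overtaking of surviving and of freshly created tokens, fresh residence intervals), taking for granted that accumulating these blocks characterizes firability of the whole sequence. You instead \emph{derive} the formula from the one-step CSCG firing rule by induction on the prefix, which forces you to make explicit the two points the paper leaves implicit: (a) that ``put in canonical form, then delete the constraints on the eliminated variable'' is an \emph{exact} existential projection for DBMs, so that the quantifier-free state-class formula $\psi_p^{(k)}$ loses no information and the eliminated variables $\underline{t}_1,\dots,\underline{t}_k$ can be re-introduced when checking consistency of the next step; and (b) that the ordering chain $\underline{t}_1\le\cdots\le\underline{t}_m$, absent from the one-step rule, is recovered from the isolation constraints via $\underline{t}_k\le\underline{p}_i=\underline{t}_{k+1}$ for $p_i\in Pre(t_{k+1})$. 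Your version buys a genuine proof where the paper has a semantic assertion, and the projection lemma in (a) is exactly the technical fact that Theorem~\ref{th1} silently relies on. Two small caveats: your disjointness claim $Pre(t_{k+1})\cap\bigcup_{l\le k}Post(t_l)=\emptyset$ needs contact-freeness of the safe net, an assumption the paper also uses implicitly; and your closing remark that the SCG fails because ``simple constraints can be lost under elimination'' is slightly off --- canonical-form projection of a DBM is exact even in the presence of simple constraints; what actually breaks convexity in the SCG is the origin shift (step~6 of the firing rule), which makes the simple constraints depend on the firing order. Neither caveat affects the validity of your argument for the Proposition itself.
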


\begin{proof} %\footnotesize
By assumption, all transitions of $T_m$ are not in conflict (i.e., $\forall t_{i}, t_{l} \in T_m \ \text{s.t.} \ t_i \neq t_l, \ Pre(t_{i}) \cap Pre(t_{l})= \emptyset$). The firing condition of the sequence $t_{1} t_{2} ...  t_{m}$ from $\alpha$ adds to $\psi_p$ the firing constraints of transitions of the sequence (for $f \in [1,m]$). We add for each transition  $t_{f}$ of the sequence, a variable, denoted $\underline{t}_f$, representing its firing delay. The added constraints consist of five blocks. The first block fixes the firing order of transitions of $T_m$. The second block means that the residence delays of tokens used by each transition $t_{f}$ must be equal to $\underline{t}_f$. The third and the fourth blocks mean that the firing delay $\underline{t}_f$ is less or equal to the residence delays of tokens that are present (and not dead) when $t_{f}$ is fired (i.e., $p_j \in (M - Deadp) - \underset{l \in [1,f[} \bigcup Pre(t_{l})$ and $p_n \in \underset{k \in [1,f[} \bigcup  Post(t_{k})$). The fifth block of constraints specifies the residence delays of tokens created by $t_{f}$ (i.e., $p_n \in Post(t_f)$). Note that $\underline{p}_{n}^f$ denotes the residence delay of the token $p_n$ created by $t_f$. \normalsize
\end{proof}
\par As an example, consider the P-TPN shown in Figure 2.b) and its
initial CSCG state class $\beta_0=(p_1+p_2, \emptyset, -5 \leq \underline{p}_1 - \underline{p}_2 \leq 1)$. The firing condition $\varphi_p{_1}$ of the sequence $t_1t_2$ is computed as follows:\\
1) Set $\varphi_p{_1}$ to $-5 \leq \underline{p}_1 - \underline{p}_2 \leq 1$;\\
2) Add variables $\underline{t}_1$ and $\underline{t}_2$ and the constraint $   \underline{t}_1 \leq \underline{t}_2$;\\
3) Add constraints specifying the firing delays of $t_1$ and $t_2$:  $\underline{t}_1 = \underline{p}_1 \ \wedge \ \underline{t}_2 = \underline{p}_2$;\\
4) Add constraints of tokens created by $t_1$: $ 1 \leq \underline{p}_3 - \underline{t}_1 \leq 5 \ \wedge \ 0 \leq \underline{p}_5 - \underline{t}_1 \leq 2$;\\
5) Add constraints specifying that the firing delay of $t_2$ is less or equal to the residence delays of the tokens created by $t_1$: $\underline{t}_2 \leq \underline{p}_3 \ \wedge \ \underline{t}_2 \leq \underline{p}_5$.\\
6) Add constraints of tokens created by $t_2$: $ 4 \leq \underline{p}_4 - \underline{t}_2 \leq 4 \ \wedge \ 0 \leq \underline{p}_6 - \underline{t}_2 \leq 2$\\
Then: $ \ \varphi_p{_1} = \  \ \ \ (-5 \leq \underline{p}_1 - \underline{p}_2 \leq 1) \ \ \wedge \ \ (\underline{t}_1 = \underline{p}_1 \  \wedge \ \underline{t}_2 = \underline{p}_2) \ \ \wedge \ \  (\underline{t}_1 \leq \underline{t}_2) \ \ \wedge $
 $$ (\underline{t}_2 \leq \underline{p}_3 \ \wedge \ \underline{t}_2 \leq \underline{p}_5) \ \ \wedge  \ \ (1 \leq \underline{p}_3 - \underline{t}_1 \leq 5 \ \wedge \ 0 \leq \underline{p}_5 - \underline{t}_1 \leq 2) \ \ \wedge \ \ (4 \leq \underline{p}_4 - \underline{t}_2 \leq 4 \ \wedge \ 0 \leq \underline{p}_6 - \underline{t}_2 \leq 2)$$
In the same manner, we obtain the firing condition $\varphi_p{_2}$ of the sequence $t_2t_1$ from $\beta_0$: \\
$\varphi_p{_2} = \ \ \ \ (-5 \leq \underline{p}_1 - \underline{p}_2 \leq 1) \ \ \wedge \ \ (\underline{t}_1 = \underline{p}_1 \  \wedge \  \underline{t}_2 = \underline{p}_2) \ \ \wedge \ \ (\underline{t}_2 \leq \underline{t}_1) \ \ \wedge $
$$ (\underline{t}_1 \leq \underline{p}_4  \ \wedge \ \underline{t}_1 \leq \underline{p}_6 ) \ \ \wedge \ \ (4 \leq \underline{p}_4 - \underline{t}_2 \leq 4 \ \wedge \ 0 \leq \underline{p}_6 - \underline{t}_2 \leq 2)  \ \ \wedge \ \ (1 \leq \underline{p}_3 - \underline{t}_1 \leq 5 \ \wedge \ 0 \leq \underline{p}_5 - \underline{t}_1 \leq 2) $$
  Since $\varphi_p{_1} \Rightarrow   \underline{t}_1 \leq \underline{p}_4 \ \wedge \ \underline{t}_1 \leq \underline{p}_6$ and  $\varphi_p{_2} \Rightarrow  \underline{t}_2 \leq \underline{p}_3 \ \wedge \ \underline{t}_2 \leq \underline{p}_5$, it follows that:\\
$\varphi_p{_1} \vee \varphi_p{_2} = \ \ \ \ \ (-5 \leq \underline{p}_1 - \underline{p}_2 \leq 1) \ \ \wedge \ \ (\underline{t}_1 = \underline{p}_1 \  \wedge  \  \underline{t}_2 = \underline{p}_2) \ \ \wedge $ $$ (\underline{t}_2 \leq \underline{p}_3 \ \wedge \ \underline{t}_2 \leq \underline{p}_5) \ \wedge  \ \ (\underline{t}_1 \leq \underline{p}_4 \ \wedge \ \underline{t}_1 \leq \underline{p}_6) \ \ \wedge $$ $$(4 \leq \underline{p}_4 - \underline{t}_2 \leq 4 \ \wedge \ 0 \leq \underline{p}_6 - \underline{t}_2 \leq 2) \ \ \wedge \ \ (1 \leq \underline{p}_3 - \underline{t}_1 \leq 5 \ \wedge \ 0 \leq \underline{p}_5 - \underline{t}_1 \leq 2) $$
Formula $\varphi_p{_1} \vee \varphi_p{_2}$ is the firing condition of $t_1$ and $t_2$ from $\beta_0$, in any order. Its domain is convex (representable by a single DBM).
The following theorem (Theorem \ref{th1}) establishes that this result is valid for any set of transitions of $T$ not in conflict and firable from a CSCG state class. The proof of this theorem follows the same ideas as those used in the previous example to show that $\varphi_p{_1} \vee \varphi_p{_2}$ can be rewritten as a conjunction of atomic constraints.

\begin{theorem} \label{th1}
Let $\beta=(M,Deadp,\psi_p)$ be a CSCG state class and $T_m \subseteq T$ a set of transitions firable from $\beta$ and not in conflict in $\beta$.\\ Then $\underset{\omega \in \Omega(T_m)} \bigcup succ(\beta, \omega) \neq \emptyset$ and $\underset{\omega \in \Omega(T_m)} \bigcup succ(\beta, \omega)$ is a state class $\beta' = (M',Deadp', \psi_p')$ where $M' = (M - \underset{t_f \in T_m} \bigcup Pre(t_f)) + \underset{t_f \in T_m} \bigcup Post(t_f)$, $Deadp'=Deadp$ and $\psi_p'$ is a conjunction of triangular atomic constraints that can be computed as follows:
\begin{itemize}
\item set $\psi_p'$ to $$\psi_p \ \wedge \  \underset{f \in [1,m]} \bigwedge \ [ \underset{p_i \in Pre(t_{f})} \bigwedge  \underline{p}_i = \underline{t}_f \ \wedge \ \underset{p_n \in Post(t_{f})} \bigwedge {\downarrow Isp(p_{n})} \leq \underline{p}_{n}^f - \underline{t}_{f} \leq
{\uparrow Isp(p_{n})} \ \ \wedge \ $$
$$ \underset{p_j \in (M - Deadp)- \underset{l \in [1,m]} \bigcup Pre(t_{l})} \bigwedge   \  \underline{t}_{f} - \underline{p}_{j} \leq 0  \  \ \wedge \ \  \underset{ k \in[1,m], p_n \in Post(t_{k})} \bigwedge \underline{t}_{f} - \underline{p}_{n}^k \leq 0 \ \ ]$$
\item Put $\psi_p'$ in canonical form, then eliminate variables $\underline{t}_1, \underline{t}_2, ..., \underline{t}_m$ and variables associated with their input places. \item Rename each variable $\underline{p}_{n}^f, \text{s.t.} \ p_n \in Post(t_f) \ \text{and} \ f \in [1,m]$, in  $\underline{p}_{n}$.
\end{itemize}
\end{theorem}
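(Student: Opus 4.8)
The plan is to reduce Theorem~\ref{th1} to Proposition~\ref{prop}. I would write $\Phi$ for the formula to which $\psi_p'$ is set in the first item of the statement — that is, $\psi_p'$ \emph{before} the canonical‑form, elimination and renaming steps — regarded as a region in the space of all the variables $\underline{p}_i$ ($p_i\in M-Deadp$), $\underline{t}_f$ ($f\in[1,m]$) and $\underline{p}_n^k$ ($k\in[1,m]$, $p_n\in Post(t_k)$). For each interleaving $\omega=t_{\sigma(1)}\cdots t_{\sigma(m)}\in\Omega(T_m)$, Proposition~\ref{prop} supplies the firing condition $\varphi_p^{\omega}$, and $succ(\beta,\omega)$ is precisely what the operations of the second and third items produce from $\varphi_p^{\omega}$ (put in canonical form, eliminate the $\underline{t}_f$'s and the variables of the input places of $T_m$, rename $\underline{p}_n^f$ into $\underline{p}_n$). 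Since eliminating variables is an existential projection and projection distributes over union, the core of the theorem reduces to the single identity
$$\bigcup_{\omega\in\Omega(T_m)}\varphi_p^{\omega}\;=\;\Phi ,$$
from which $\bigcup_{\omega}succ(\beta,\omega)=\beta'$ follows, with $M'=(M-\bigcup_{t_f\in T_m}Pre(t_f))+\bigcup_{t_f\in T_m}Post(t_f)$ and $Deadp'=Deadp$ (these are the same for every $\omega$ since the transitions of $T_m$ are pairwise non‑conflicting and no token dies along such firings), and $\psi_p'$ the stated projection of $\Phi$.

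First I would prove the easy inclusion $\varphi_p^{\omega}\Rightarrow\Phi$. Relative to $\Phi$, the formula $\varphi_p^{\omega}$ carries the extra ordering chain $\underline{t}_{\sigma(1)}\le\cdots\le\underline{t}_{\sigma(m)}$ but fewer precedence constraints: it requires $\underline{t}_f-\underline{p}_n^k\le 0$ only when $t_k$ precedes $t_f$ in $\omega$, and $\underline{t}_f-\underline{p}_j\le 0$ for the tokens still present when $t_f$ fires. Every precedence constraint that $\Phi$ adds but $\varphi_p^{\omega}$ lacks is implied: for a created token $\underline{p}_n^k$ with $t_k$ at or after the place of $t_f$ in $\omega$ one combines $\underline{t}_f\le\underline{t}_k$ with $\underline{p}_n^k-\underline{t}_k\ge{\downarrow Isp(p_n)}\ge 0$ — exactly the step $\varphi_p{_1}\Rightarrow\underline{t}_1\le\underline{p}_4\wedge\underline{t}_1\le\underline{p}_6$ of the worked example — and $\Phi$'s surviving‑token constraints range over a subset of the indices already present in $\varphi_p^{\omega}$.

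Then I would prove the converse. Given a valuation $v$ satisfying $\Phi$, I would sort the values $v(\underline{t}_1),\dots,v(\underline{t}_m)$, let $\omega$ be the corresponding ordering, and verify $v\models\varphi_p^{\omega}$: the chain holds by construction; the blocks $\psi_p$, $\underline{p}_i=\underline{t}_f$ and the static‑interval bounds on the $\underline{p}_n^f$ are shared with $\Phi$; each constraint $\underline{t}_f-\underline{p}_j\le 0$ of $\varphi_p^{\omega}$ with $p_j$ consumed by a transition placed after $t_f$ in $\omega$ follows from $v(\underline{p}_j)=v(\underline{t}_l)$ and the chain, while those with $p_j$ surviving all of $T_m$, together with all the created‑token precedences, are already in $\Phi$. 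This gives $\Phi\subseteq\bigcup_{\omega}\varphi_p^{\omega}$, hence equality.

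Finally, I would record that consistency of $\Phi$ (so that $\beta'\ne\emptyset$) follows from the hypothesis that $T_m$ is firable from $\beta$, which furnishes some $\omega_0$ with $\varphi_p^{\omega_0}$ consistent by Proposition~\ref{prop}, and $\varphi_p^{\omega_0}\Rightarrow\Phi$; and I would check that the elimination step really returns a conjunction of triangular atomic constraints — $\psi_p$ is triangular and every new variable is tied to the others only through differences (the equalities $\underline{p}_i=\underline{t}_f$ and the bounds on $\underline{p}_n^f-\underline{t}_f$, $\underline{t}_f-\underline{p}_n^k$, $\underline{t}_f-\underline{p}_j$), so no constraint involving the reference variable $\underline{p}_0$ is ever created and the projection of the canonical form keeps only constraints $\underline{p}_n^f-\underline{p}_{n'}^{f'}\le c$, $\underline{p}_n^f-\underline{p}_j\le c$ and their symmetric forms, i.e.\ $\psi_p'$ after renaming. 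I expect the identity $\bigcup_{\omega}\varphi_p^{\omega}=\Phi$ to be the crux: the delicate point is that, for the interleaving obtained by sorting the firing delays, the precedence constraints $\varphi_p^{\omega}$ puts on tokens consumed further down the sequence must be recovered from the ordering chain and the equalities rather than being lost.
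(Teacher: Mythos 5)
Your proposal is correct and follows essentially the same route as the paper's proof: the identity $\bigcup_{\omega}\varphi_p^{\omega}=\Phi$ is exactly the paper's rewriting of each $\varphi_p^{\omega}$ as (order-independent part) $\wedge$ (ordering chain), with your implications for created tokens ($\underline{t}_f\le\underline{t}_k\le\underline{p}_n^k$) and consumed tokens ($\underline{t}_f\le\underline{t}_l=\underline{p}_j$) being precisely the paper's redundancy arguments (2') and (3'). Your two-inclusion formulation with the explicit sorting step, plus the checks on consistency and on triangularity of the projected DBM, only make explicit what the paper leaves implicit.
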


 \begin{proof} %\footnotesize
 If transitions of $T_m$ are all firable from $\beta$ and not in conflict then the firing of one of them cannot disable the others. So, all sequences of $\Omega(T_m)$ are firable from $\beta$. Then: $\underset{\omega \in \Omega(T_m)} \bigcup succ(\beta, \omega) \neq \emptyset$.
Let us first rewrite the firing condition $\varphi_p$, given in Proposition \ref{prop}, of the sequence $\omega= t_1t_2....t_m$, so as to isolate the part that is independent from the firing order. In other words, let us show that:
$\varphi_p \equiv$ $$\psi_p \ \wedge \ \underline{t}_1 \leq \underline{t}_2 \leq ... \leq \underline{t}_m \ \ \wedge \ \ $$ $$ \underset{f \in [1,m]} \bigwedge \ [ \ \underset{p_i \in Pre(t_{f})} \bigwedge  \underline{p}_i = \underline{t}_f \ \wedge \ \underset{p_n \in Post(t_{f})} \bigwedge {\downarrow Isp(p_{n})} \leq \underline{p}_{n}^f - \underline{t}_{f} \leq
{\uparrow Isp(p_{n})} \ \wedge $$   $$ \underset{p_j \in (M-Deadp) - \underset{l \in [1,m]} \bigcup Pre(t_{l})} \bigwedge   \  \underline{t}_{f} - \underline{p}_{j} \leq 0  \  \wedge  \  \underset{ k \in[1,m], p_n \in Post(t_{k})} \bigwedge \underline{t}_{f} - \underline{p}_{n}^k \leq 0 ]$$
Consider the following sub-formula, denoted $\varphi_1$, of $\varphi_p$: $$ \underline{t}_1 \leq \underline{t}_2 ... \leq \underline{t}_m  \ \wedge \ \underset{f \in [1,m]} \bigwedge \ [  \underset{p_i \in Pre(t_{f})} \bigwedge \  \underline{p}_i = \underline{t}_f \ \wedge \ \underset{p_n \in Post(t_{f})} \bigwedge {\downarrow Isp(p_{n})} \leq \underline{p}_{n}^f - \underline{t}_{f} \leq
{\uparrow Isp(p_{n})}]$$
This formula implies that:
(1) \ $\forall f \in [1,m], \forall l \in [f,m], \underline{t}_f \leq \underline{t}_l$. \\
(2) \ $\forall f \in [1,m],\forall l \in [f,m], \forall p_j \in Pre(t_l), \underline{t}_f \leq \underline{t}_l = \underline{p}_j$.\\ Then: (2') $\varphi_1 \Rightarrow \underset{f \in [1,m], p_j \in \underset{l \in [f,m]} \bigcup Pre(t_{l})} \bigwedge   \  \underline{t}_{f} - \underline{p}_{j} \leq 0$.\\
(3) \ $\forall f \in [1,m],\forall l \in [f,m], \forall p_n \in Post(t_l), \underline{t}_f \leq \underline{t}_l \leq \underline{p}_n^l$.\\ Then: (3') $\varphi_1 \Rightarrow \underset{f \in [1,m], l \in [f,m], p_n \in Post(t_{l})} \bigwedge   \  \underline{t}_{f} - \underline{p}_{n}^l \leq 0$.\\
Consider now the following sub-formula, denoted $\varphi_2$, of $\varphi_p$: $$\underset{f \in [1,m], p_j \in (M-Deadp) - \underset{l \in [1,f[} \bigcup Pre(t_{l})} \bigwedge   \  \underline{t}_{f} - \underline{p}_{j} \leq 0$$
From (2'), it follows that constraints (2) are redundant in the part $\varphi_2$ of $\varphi_p$ and then can be eliminated from the part $\varphi_2$ of $\varphi$, without altering the domain of $\varphi_p$:
$$\underset{f \in [1,m], p_j \in (M-Deadp) - \underset{l \in [1,m[} \bigcup Pre(t_{l})} \bigwedge   \  \underline{t}_{f} - \underline{p}_{j} \leq 0$$
Let $\varphi_3$ be the following part of $\varphi$: $$\underset{f \in [1,m], l \in [1,f[, p_n \in Post(t_{l})} \bigwedge   \  \underline{t}_{f} - \underline{p}_{n}^l \leq 0$$
From (3'), it follows that constraints (3) are redundant in the part $\varphi_1$ of $\varphi_p$ and then can be added to the part $\varphi_3$ of $\varphi_p$, without altering the domain of $\varphi_p$:
$$\underset{f \in [1,m], l \in [1,m], p_n \in Post(t_{l})} \bigwedge   \  \underline{t}_{f} - \underline{p}_{n}^l \leq 0$$
Therefore, $\varphi_p \equiv$ $$\psi_p \ \wedge \   \underline{t}_1 \leq \underline{t}_2 \leq ... \leq \underline{t}_m \ \ \wedge \ \ $$ $$ \underset{f \in [1,m]} \bigwedge \ [ \ \underset{p_i \in Pre(t_{f})} \bigwedge  \underline{p}_i = \underline{t}_f \ \wedge \ \underset{p_n \in Post(t_{f})} \bigwedge {\downarrow Isp(p_{n})} \leq \underline{p}_{n}^f - \underline{t}_{f} \leq
{\uparrow Isp(p_{n})} \ \wedge $$   $$ \underset{p_j \in (M-Deadp) - \underset{l \in [1,m]} \bigcup Pre(t_{l})} \bigwedge   \  \underline{t}_{f} - \underline{p}_{j} \leq 0  \ \ \wedge  \ \   \underset{ k \in[1,m], p_n \in Post(t_{k})} \bigwedge \underline{t}_{f} - \underline{p}_{n}^k \leq 0 \ \ ]$$
We have rewritten the firing condition of the sequence $t_1 t_2... t_m$ so as to isolate the part  $ \underline{t}_1 \leq \underline{t}_2 ... \leq \underline{t}_m $ fixing the firing order from the other part, which is independent of the firing order. It follows that the firing condition of transitions of $T_m$ in any order, denoted $\phi_p'$, is: $$\psi_p \ \wedge \  \underset{f \in [1,m]} \bigwedge \ [ \underset{p_i \in Pre(t_{f})} \bigwedge  \underline{p}_i = \underline{t}_f \ \wedge \ \underset{p_n \in Post(t_{f})} \bigwedge {\downarrow Isp(p_{n})} \leq \underline{p}_{n}^f - \underline{t}_{f} \leq
{\uparrow Isp(p_{n})} \ \wedge $$   $$ \underset{p_j \in (M-Deadp) - \underset{l \in [1,m]} \bigcup Pre(t_{l})} \bigwedge   \  \underline{t}_{f} - \underline{p}_{j} \leq 0  \ \ \wedge \  \  \underset{ k \in[1,m], p_n \in Post(t_{k})} \bigwedge \underline{t}_{f} - \underline{p}_{n}^k \leq 0  \ \ ]$$
To obtain the formula of $\beta'$, it suffices to put $\phi_p'$ in canonical form and then eliminates variables associated with transitions of $T_m$ and their input places. \normalsize
 \end{proof}
\par Theorem \ref{th1} is also valid for unsafe P-TPNs in the context of multiple-server semantics. The proof of this claim is similar, except that markings, presets and postsets of transitions are multisets over places. In this case, a variable is associated with each token (instead of each place). Transitions can be multi-enabled. Each enabling instance of a transition is defined as a couple composed by the name of the transition and the multiset of tokens participating in its enabling. Its firing delay depends on time constraints of its tokens. A variable is associated with each enabling instance of the same transition. In the next section, we will extend the result established in Theorem \ref{th1} to the A-TPN model.

\section{A-Time Petri Nets}
The A-TPN model is the most powerful model in the class of \{P,T,A\}-TPN \cite{boyer-FI-08}. Like in P-TPN, A-TPN uses the notion of availability intervals of tokens but each token of a place $p$ has an availability interval per output arc of $p$, whereas, in P-TPN, each token has only one availability interval. As for P-TPN, we consider, in the following, safe A-TPN.

\par Formally, A-TPN is a tuple \((P, T, Pre, Post, M_{0}, Isa) \) where: \begin{enumerate} \item \(P\), \(T\), \(Pre\), \(Post\) and \(M_{0}\) are defined as for P-TPN, \item Let $IE=\{(p_i,t_j) \in P \times T  |  p_i \in Pre(t_j)\}$ be the set of input arcs of all transitions. $Isa: IE \rightarrow  \mathbb{Q}^{+}\times(\mathbb{Q}^{+}\cup \{\infty\})$ is the static
availability interval function. \(Isa(p_i,t_j)\) specifies the lower \({\downarrow Isa(p_i,t_j)}\)
and the upper \({\uparrow Isa(p_i,t_j)}\) bounds of the static availability
interval of tokens of $p_i$ for \(t_j\). \end{enumerate}

% \indent Let \(M \subseteq P\) be a marking and \(t\) a transition
% of $T$. We denote $Nw(M,t)= \{ t' \in En((M-Pre(t)) \cup Post(t)) \ | \  t'  \notin En(M-Pre(t))\}$ the set of transitions newly enabled by firing $t$ from $M$.\\
% \noindent Let $t \in En(M)$ be an enabled transition of $M$. We denote $CF(M,t)$ the set of transitions of $En(M)$ in conflict with $t$, i.e., $CF(M,t)= \{t' \in En(M) \| \ Pre(t) \cap Pre(t') \neq \emptyset\}$.\\
\par Since, in A-TPN, intervals are associated with arcs connecting places to transitions, the notion of dead tokens of the P-TPN model is replaced by dead arcs. If a place $p_i$ is marked and connected to a transition $t_j$, the arc $(p_i,t_j)$ will die if the residence time of the token of $p_i$ overpasses the availability interval of the arc $(p_i,t_j)$. To detect dead arcs, we use the special transition $Err$, as for the P-TPN model.

\par Let $EE(M)=\{(p_i,t_j) \in M \times T \ | \ p_i \in Pre(t_j) \}$ be the set of enabled arcs in $M$. The A-TPN state is defined as a triplet \((M,Deada, Ia)\), where \(M \subseteq P\) is a
marking, $Deada \subseteq EE(M)$ is the set of dead arcs in $EE(M)$ and \(Ia\) is the interval function \((Ia: EE(M) - Deada \rightarrow \mathbb{Q}^{+}\times(\mathbb{Q}^{+}\cup \{\infty\}))\) which associates with each enabled and non dead arc an availability interval. The initial state of the A-TPN model is \(s_{0}=(M_{0},Deada_0, Ia_{0})\) where $Deada_0=\emptyset$, \(Ia_{0}(p_i,t_j) = Isa(p_i,t_j)\), for all \((p_i,t_j) \in EE(M_0)\). When a token is created in place \(p_i\), the availability interval of each output arc $(p_i,t_j)$ is set to its static interval \(Isa(p_i,t_j)\) and then decreases,  synchronously with time, until the token within $p_i$ is consumed or the arc dies. A transition \(t_f\) can fire iff all its input arcs are not dead and have reached their availability intervals, i.e., the lower bounds of the intervals of its input arcs have reached $0$. But, it must fire, without any additional delay, if the upper bound of, at least, one of its input arcs has reached $0$. The firing of a transition takes no time.
\par The A-TPN state space is the timed transition system \((S, \rightarrow,
s_{0})\), where \(s_{0}\) is the initial state of
the A-TPN and \(S = \{s \ | \ s_{0}\overset{*}\rightarrow s \}\)
 is the set of reachable states of the model, \(\overset{*}\rightarrow\) being the reflexive and transitive closure  of the relation \(\rightarrow\) defined as follows.\\
Let \(s=(M,Deada,Ia), s'=(M',Deada', Ia')\) be two A-TPN states, \(d \in \mathbb{R^+}, t_f \in T\),\\ - \ \(s
\overset{d}\rightarrow s'\), iff \(\forall (p_i,t_j) \in EE(M)-Deada, \
d  \leq  {\uparrow Ia(p_i,t_j)}\), \ \(M'= M\), $Deada'=Deada$ \ and \
\(\forall (p_k,t_l) \in EE(M')-Deada', Ia'(p_k,t_l) =  [Max(\downarrow Ia(p_k,t_l)- d,0), {\uparrow Ia(p_k,t_l)}-d]\). The time progression is allowed while we do not overpass intervals of all non dead arcs of $EE(M')$.\\ - \ \(s \overset{t_f}\rightarrow s'\) iff  state \(s'\) is immediately reachable from state \(s\) by firing transition \(t_f\), i.e.,
\(Pre(t_f)\times \{t_f\} \subseteq EE(M)-Deada\),\ \ \( \forall p_i \in Pre(t_f),  {\downarrow Ia(p_i,t_f)} = 0\),
 \(M' = (M - Pre(t_f)) \cup Post(t_f)\),\ $Deada'=Deada - (Pre(t_f) \times T)$,
 and \(\forall (p_k,t_l) \in EE(M')-Deada'\), $Ia'(p_k,t_l) = Isa(p_k,t_l),$  if $ ~ p_k \in Post(t_f)$ and $Ia'(p_k,t_l) =Ia(p_k,t_l)$
 otherwise. It means that all input arcs of $t_f$ are enabled, not dead and  have reached their availability intervals. The firing of $t_f$ consumes tokens of its input places and produces tokens in its output places (one token per output place). The consumed tokens and their output arcs are removed. The produced tokens are added to the marking. The availability intervals of their output arcs are set to their static availability intervals.\\
- \ \(s \overset{Err}\rightarrow s'\) iff  state \(s'\) is immediately
reachable from state \(s\) by firing transition \(Err\). Transition $Err$ is immediately firable from $s$ if there no transition of $T$ firable from $s$ and there is at least an arc in $EE(M)-Deada$ s.t. the upper bound of its interval has reached $0$ i.e., \((\forall t_k \in T \ \text{s.t.} \ Pre(t_k) \times \{t_k\} \subseteq EE(M)-Deada, \exists p_j \in Pre(t_k),   {\downarrow Ia(p_j,t_k)}>0 ) \), \((\exists (p_i,t_l) \in EE(M)-Deada, {\uparrow Ia(p_i,t_l)} = 0)\),
 \(M' = M\), $Deada'=Deada \cup \{ (p_j,t_l) \in EE(M)-Deada | {\downarrow Ia(p_j,t_l)})=0 \}$,  and (\(\forall (p_i,t_j) \in EE(M') - Deada'\), $Ia'(p_i,t_j) = Ia(p_i.t_j)$).
% It also may reach a  timelock state. A state $s$ of A-TPN is a timelock state iff no progression of time is possible and no transition is firable from $s$. In the rest of the paper, we suppose that all states of the A-TPN are timelock free. Timelocks are considered as modeling flaws that should be avoided. \\
%  A \emph{run} in the  A-TPN state space \((S,\rightarrow, s_{0})\), starting from a state \(s\), is a maximal sequence \(\rho = s_1  \overset{d_1}\rightarrow s_1 + d_1 \overset{t_1}\rightarrow  s_{2} \overset{d_2}\rightarrow.....\), such that \(s_1=s\). By convention, for any state $s_i$, relation \(s_i \overset{0}\rightarrow s_i\) holds.
%  The sequence $d_1 t_1 d_2 t_2 ...$ is called the timed trace of $\rho$. The sequence $t_1t_2....$ is called the untimed trace of $\rho$. Runs of the A-TPN are all runs starting from the initial state \(s_0\). Its timed (resp. untimed) traces are timed (resp. untimed) traces of its initial state (timed (untimed) trace language of the A-TPN).
\subsection{The CSCG of the A-TPN}
% \indent The ZBG of a A-TPN is an abstraction which preserves markings and linear properties. In the Zone Based Graph (\emph{ZBG})\cite{Zbg}, all states reachable by runs
% supporting the same firing sequence are agglomerated in the same
% node and considered modulo some relation of equivalence based on
% an over-approximation of zones \cite{Bouyer06,Zbg}. This relation
% is used to ensure the finiteness of the ZBG for Bounded A-TPN's with
% unbounded firing intervals. Next, we first focus on the
% construction of the ZBG for the A-TPN's with no unbounded firing
% intervals. Afterwards, we show how to handle the case of A-TPN's with unbounded firing intervals.\\
The definition of the CSCG of the P-TPN is extended to the A-TPN by replacing the notion of dead tokens by dead arcs and constraints on availability of tokens by those of arcs. The CSCG state class of A-TPN is defined as a triplet $\gamma=(M, Deada, \phi_a)$ where $M \subseteq P$ is a marking, $Deada \subseteq EE(M)$ is the set of dead arcs in $EE(M)$ and $\phi_a$ is a conjunction of triangular atomic constraints over variables associated with non dead arcs of $EE(M)$. Each arc $(p_i,t_j)$ of
\((EE(M)-Deada)\) has a variable, denoted $\underline{pt}_{ij}$ in \(\phi_a\), representing
its availability interval.
%\par From the practical point of view, $\phi$ is encoded by a difference bound matrix (DBM). The DBM of $\phi$ is a square matrix $Z$ of order $|M|+1$, indexed by variables of $\phi$ and a special variable $p_0 \notin P$ whose value is fixed at $0$. Each entry $z_{ij}$ represents the atomic constraint $\underline{p}_i-\underline{p}_j \leq z_{ij}$. Hence, entries $z_{i0}$ and $z_{0j}$ represent simple atomic constraints $\underline{p}_i \leq z_{i0}$ and $- \underline{p}_j \leq z_{0j}$, respectively. If there is no upper bound on $\underline{p}_i - \underline{p}_j$ with $i\neq j$, $d_{ij}$ is set to $\infty$. Entry $z_{ii}$ is set to $0$.  Though the same nonempty domain may be encoded by different DBMs, they have a unique form called canonical form. The canonical form of a DBM is the representation with tightest bounds on all differences between variables, computed by propagating the effect of each entry through the DBM. It can be computed in $O(n^3)$, $n$ being the number of variables in the DBM, using a shortest path algorithm, like Floyd-Warshall's all-pairs shortest path algorithm \cite{Bouyer06}. Canonical forms make operations over DBMs much simpler \cite{Ben02}. Indeed, two state classes are said to be equal iff they have the same canonical form (i.e., they have the same marking and the DBMs of their formulas have the same canonical form).\\
\par \indent The initial CSCG state class is: \(\gamma_0=(M_{0},Deada_0, \psi_a{_0})\) where $M_0 \subseteq P$ is the initial marking, $Deada_0=\emptyset$ and $\psi_a{_0} = \underset{ (p_i,t_j) \in EE(M_0), (p_k,t_l) \in EE(M_0)} \bigwedge \underline{pt}_{ij} - \underline{p}_{kl} \leq
{\uparrow Isa(p_i,t_j)}-{\downarrow Isa(p_k,t_l)} $.
\par Successor state classes are computed using the following firing rule:
Let \(\gamma=(M, Deada,\psi_a)\) be a state class and \(t_f\) a transition of $T$. The state
class \(\gamma\) has a successor by \(t_f\) (i.e.,
\(succ(\gamma, t_f)\neq \emptyset\)) iff
 \(Pre(t_f)\times \{t_f\} \subseteq EE(M)-Deada\) and the following formula is
 consistent: $$ \psi_a \wedge (\underset{p_i \in Pre(t_f), (p_j,t_k) \in \ EE(M)-Deada}\bigwedge \underline{pt}_{if} \leq  \underline{pt}_{jk})$$
This firing condition means that $t_f$ is enabled in $M$, its input arcs are not dead, and there is a state s.t. the input arcs  of $t_f$ will reach their intervals before overpassing intervals of all non dead arcs in $EE(M)$. \\
If \(succ(\gamma, t_f)\neq \emptyset\) then \(succ(\gamma,
 t_f)=(M',Deada',\psi_a')\) is  computed as follows:\begin{enumerate} \item \(M' = (M - Pre(t_f)) \cup
 Post(t_f)\);
 \item $Deada'=Deada - (Pre(t_f) \times T)$
 \item Set $\psi_a'$ to $ \ \ \psi_a \wedge (\underset{p_i \in Pre(t_f), (p_j,t_k) \in EE(M)-Deada}\bigwedge \underline{pt}_{if} \leq  \underline{pt}_{jk})$;\item Replace variables $\underline{pt}_{if}$ associated with input arcs of $t_f$ by $\underline{t}_f$; \item Add constraints $\underset{p_n \in Post(t_f), t_l \in p_{n}^\circ} \bigwedge {\downarrow Isa(p_n,t_l)} \leq \underline{pt}_{nl} - \underline{t}_f \leq {\uparrow Isa(p_n,t_l)}$;
\item Put $\psi_a'$ in canonical form and then eliminate  $\underline{t}_f$.
\end{enumerate}
\noindent If $t_f$ is firable then its firing consumes its input tokens and creates tokens in its output places (one token per output place). The consumed tokens and their output arcs are eliminated. Step 3) isolates states of $\gamma$ from which $t_f$ is firable (i.e., states where input arcs of $t_f$ reach their availability interval before overpassing the availability intervals of all non dead enabled arcs). This step implies that for all $p_i, p_j \in Pre(t_f), \underline{pt}_{if}=\underline{pt}_{jf}$. Step 4) replaces all these equal variables by $\underline{t}_f$.  Steps 5) adds the time constraints of the created tokens. Step 6) puts $\psi_a'$ in canonical form before eliminating variable $\underline{t}_f$.
%\end{proof}
\subsection{Interleaving in the CSCG of A-TPN}
The following theorem extends, to A-TPN, the result established in Theorem \ref{th1}.

\begin{theorem} \label{th2}
Let $\gamma=(M,Deada,\psi_a)$ be a CSCG state class and $T_m \subseteq T$ a set of transitions firable from $\gamma$ and not in conflict in $\gamma$.\\ Then $\underset{\omega \in \Omega(T_m)} \bigcup succ(\gamma, \omega) \neq \emptyset$ and $\underset{\omega \in \Omega(T_m)} \bigcup succ(\gamma, \omega)$ is a state class $\gamma' = (M',Deada', \psi_a')$ where $M' = (M - \underset{t_f \in T_m} \bigcup Pre(t_f)) \cup \underset{t_f \in T_m} \bigcup Post(t_f)$, $Deada'=Deada- (\underset{t_f \in T_m} \bigcup Pre(t_f) \times T) $ and $\psi_a'$ is a conjunction of triangular atomic constraints that can be computed as follows:
\begin{itemize}
\item Set $\psi_a'$ to $$\psi_a \ \wedge \  \underset{f \in [1,m]} \bigwedge \ [ \underset{p_i \in Pre(t_{f})} \bigwedge  \underline{p}_{if} = \underline{t}_f \ \ \wedge \ \ \underset{p_n \in Post(t_{f}), t_l \in p_n^\circ } \bigwedge {\downarrow Isa(p_{n},t_l)} \leq \underline{p}_{nl}^f - \underline{t}_{f} \leq
{\uparrow Isa(p_{n}, t_l)} \ \ \wedge \ $$
$$ \underset{(p_j,t_k) \in (EE(M)-Deada) - \underset{l \in [1,m]} \bigcup Pre(t_{l}) \times T} \bigwedge   \  \underline{t}_{f} - \underline{p}_{jk} \leq 0  \ \  \wedge \ \  \underset{ k \in[1,m], p_n \in Post(t_{k}), t_l \in p_n^\circ } \bigwedge \underline{t}_{f} - \underline{p}_{nl}^k \leq 0 \ \ ]$$
\item Put $\psi_a'$ in canonical form, then eliminate variables $\underline{t}_1, \underline{t}_2, ..., \underline{t}_m$ and variables associated with their input places. \item Rename each variable $\underline{p}_{nl}^f, \text{s.t.} \ p_n \in Post(t_f), t_l \in p_n^\circ  \ \text{and} \ f \in [1,m]$, in  $\underline{p}_{nl}$.
\end{itemize}
\end{theorem}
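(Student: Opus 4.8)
The plan is to mirror the structure of the proof of Theorem \ref{th1}, exploiting the fact that the A-TPN firing rule is, modulo notation, the P-TPN firing rule with tokens replaced by arcs: a variable $\underline{pt}_{ij}$ now tracks the availability delay of the arc $(p_i,t_j)$ rather than of a token in $p_i$, and ``dead tokens'' become ``dead arcs'' with $Deada' = Deada - (\bigcup_{t_f \in T_m} Pre(t_f) \times T)$. First I would record that, since the transitions of $T_m$ are firable from $\gamma$ and pairwise non-conflicting (so $Pre(t_i) \cap Pre(t_l) = \emptyset$ for $t_i \neq t_l$, hence the input arc sets $Pre(t_i)\times\{t_i\}$ are disjoint), firing any one of them removes none of the input arcs of the others; therefore every interleaving $\omega \in \Omega(T_m)$ is firable and $\bigcup_{\omega} succ(\gamma,\omega) \neq \emptyset$.

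Next I would write down, by iterating the single-transition firing rule (steps 1--6 of the A-TPN rule), the firing condition $\varphi_a$ of a fixed sequence $\omega = t_1 t_2 \ldots t_m$, exactly as Proposition \ref{prop} does for P-TPN but with arc variables: it consists of $\psi_a$, the ordering block $\underline{t}_1 \leq \underline{t}_2 \leq \ldots \leq \underline{t}_m$, the equalities $\underline{p}_{if} = \underline{t}_f$ for $p_i \in Pre(t_f)$ (coming from step 4, which collapses the equal input-arc variables of $t_f$), the creation constraints ${\downarrow Isa(p_n,t_l)} \leq \underline{p}_{nl}^f - \underline{t}_f \leq {\uparrow Isa(p_n,t_l)}$ for each output place $p_n$ of $t_f$ and each output arc $(p_n,t_l)$, and the ``not overpassed'' constraints $\underline{t}_f - \underline{p}_{jk} \leq 0$ ranging over arcs alive and not yet consumed when $t_f$ fires, i.e.\ $(p_j,t_k) \in (EE(M)-Deada) - \bigcup_{l \in [1,f[} Pre(t_l)\times T$, together with $\underline{t}_f - \underline{p}_{nl}^k \leq 0$ for arcs $(p_n,t_l)$ created by earlier transitions $t_k$, $k \in [1,f[$.

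The core of the argument is then the same redundancy manipulation as in Theorem \ref{th1}: the ordering block already forces $\underline{t}_f \leq \underline{t}_l$ for $l \geq f$, hence (via $\underline{t}_l = \underline{p}_{jk}$ for $(p_j,t_k)$ an input arc of $t_l$, and via $\underline{t}_l \leq \underline{p}_{nl}^l$ for $(p_n,t_l)$ created by $t_l$) one gets $\varphi_a \Rightarrow \underline{t}_f - \underline{p}_{jk} \leq 0$ for \emph{all} input arcs of \emph{all} $t_l$ with $l \geq f$, and similarly for the created arcs. This lets me enlarge the index ranges of the ``not overpassed'' blocks from $[1,f[$ to $[1,m]$ without changing the solution set, which shows $\varphi_a$ is equivalent to the conjunction of $\psi_a$, the ordering block, and a block $\varphi_a^0$ that is manifestly symmetric in $t_1,\ldots,t_m$. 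Taking the disjunction over $\omega \in \Omega(T_m)$: all the order-specific pieces (the chains $\underline{t}_{\sigma(1)} \leq \cdots \leq \underline{t}_{\sigma(m)}$) disjoin to ``true'' on the common constraint set (for any consistent valuation of $\psi_a \wedge \varphi_a^0$ the $\underline{t}_i$ admit some total order compatible with the already-implied inequalities), so $\bigvee_{\omega} \varphi_a = \psi_a \wedge \varphi_a^0$, which is exactly the formula displayed in the theorem. Finally, putting it in canonical form and eliminating $\underline{t}_1,\ldots,\underline{t}_m$ and the input-arc variables yields the DBM (a conjunction of triangular atomic constraints) of $\gamma'$, and the marking/dead-arc components are read off directly; relabelling $\underline{p}_{nl}^f$ to $\underline{p}_{nl}$ finishes the construction.

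The main obstacle I anticipate is bookkeeping rather than conceptual: in A-TPN a single newly created token $p_n$ spawns \emph{several} fresh arc variables $\underline{p}_{nl}^f$ (one per output transition $t_l \in p_n^\circ$), and one must be careful that the equalities among the input-arc variables of a transition (step 4, i.e.\ $\underline{pt}_{if} = \underline{pt}_{jf}$ for all $p_i,p_j \in Pre(t_f)$) are genuinely forced by the firing condition before they are renamed to $\underline{t}_f$ — this is what guarantees $\underline{t}_f$ is well defined and that the redundancy argument goes through uniformly over the $|p_n^\circ|$ copies. One should also check that the elimination of $\underline{t}_1,\ldots,\underline{t}_m$ and the consumed input-arc variables leaves only triangular constraints $\underline{p}_{ab} - \underline{p}_{cd} \leq c$; this follows because $\psi_a$ and $\varphi_a^0$ contain no simple bounds on the surviving arc variables except through the $\underline{t}_f$'s, exactly as in the CSCG construction of Section 2, so canonicalisation and Gaussian-style elimination of the $\underline{t}_f$'s preserve the triangular form. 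Apart from this, every step is a transcription of the P-TPN proof with ``token in $p$'' replaced by ``arc $(p,t)$''.
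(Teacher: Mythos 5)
Your proposal is correct and follows essentially the same route as the paper's own proof: extend the sequence firing condition of Proposition~\ref{prop} to arc variables, use the ordering chain and the equalities $\underline{p}_{if}=\underline{t}_f$ (resp.\ the creation constraints) to enlarge the index ranges of the ``not yet overpassed'' blocks from $[1,f[$ to $[1,m]$, and thereby split $\varphi_a$ into an order-independent part plus the chain, whose disjunction over $\Omega(T_m)$ is trivially true. Your explicit remark that any consistent valuation admits some total order of the $\underline{t}_i$ (justifying $\bigvee_{\omega}\varphi_a = \psi_a \wedge \varphi_a^0$) is a small but welcome addition that the paper leaves implicit.
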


  \begin{proof} %\footnotesize
We first extend the firing condition of a sequence $\omega=t_1t_2...t_n$ of $\Omega(T_m)$ given in Proposition \ref{prop} to the case of A-TPN. $\omega$ is firable from $\gamma$ (i.e., $succ(\beta,\omega)$) iff the following formula, denoted $\varphi_a$ is consistent:
$$\psi_a \ \wedge \  \underline{t}_1 \leq \underline{t}_2 \leq ... \leq \underline{t}_m \ \ \wedge \ \ $$ $$ \underset{f \in [1,m]} \bigwedge \ [ \ \ \ \ \ \ \ \ \ \ \ \ \ \ \ \ \ \ \ \ \ \ \ \  \underset{p_i \in Pre(t_{f})} \bigwedge  \underline{p}_{if} = \underline{t}_f \ \ \wedge \ \ \underset{(p_j,t_k) \in (EE(M)-Deada) - \underset{l \in [1,f[} \bigcup (Pre(t_{l}) \times T)} \bigwedge   \  \underline{t}_{f} - \underline{p}_{jk} \leq 0  \  \wedge  \  $$
$$ \underset{ k \in[1,f[, p_n \in Post(t_{k}), t_l \in p_n^{\circ} } \bigwedge \underline{t}_{f} - \underline{p}_{nl}^k \leq 0
 \ \ \wedge \ \ \underset{p_n \in Post(t_{f}), t_l \in p_n^{\circ} } \bigwedge {\downarrow Isa(p_{n},t_l)} \leq \underline{p}_{nl}^f - \underline{t}_{f} \leq
{\uparrow Isa(p_{n}, t_l)} \ \ ]$$
The firing condition of the sequence $t_{1} t_{2} ...  t_{m}$ from $\gamma$ adds to $\psi_a$ for each transition  $t_{f}$ of the sequence, a variable, denoted $\underline{t}_f$, representing its firing delay and five blocks of constraints. The first block fixes the firing order of transitions of $T_m$. The second block means that the residence delays of arcs used by each transition $t_{f}$ must be equal to $\underline{t}_f$. The third and the fourth blocks mean that the firing delay $\underline{t}_f$ is less or equal to the residence delays of all enabled and non dead arcs present when $t_{f}$ is fired (i.e., $(p_j,t_k) \in (EE(M) - Deada) - (\underset{l \in [1,f[} \bigcup Pre(t_{l}) \times T)$ and $(p_n,t_l)$ s.t. $p_n \in \underset{k \in [1,f[} \bigcup  Post(t_{k})$ and $t_l \in p_n^\circ $). The fifth block of constraints specifies the residence delays of arcs enabled by $t_{f}$ (i.e., $(p_n,t_l) \ \text{s.t.} \ p_n \in Post(t_f) \ \text{and} \ t_l \in p_n^\circ $). %\qed
 The rest of the proof follows the same steps as the proof of Theorem \ref{th1}. In other words, let us show that $\varphi_a \equiv  \ \ \psi_a \ \wedge \ \underline{t}_1 \leq \underline{t}_2 \leq ... \leq \underline{t}_m \ \ \wedge \ \ $ $$ \underset{f \in [1,m]} \bigwedge \ [ \ \underset{p_i \in Pre(t_{f})} \bigwedge  \underline{p}_{if} = \underline{t}_f \ \wedge \ \underset{p_n \in Post(t_{f}), t_l \in p_n^\circ } \bigwedge {\downarrow Isa(p_{n},t_l)} \leq \underline{p}_{nl}^f - \underline{t}_{f} \leq
{\uparrow Isa(p_{n},t_l)} \ \wedge $$   $$ \underset{(p_j,t_k) \in (EE(M)-Deada) - \underset{l \in [1,m]} \bigcup Pre(t_{l}) \times T} \bigwedge   \  \underline{t}_{f} - \underline{p}_{jk} \leq 0  \ \  \wedge  \ \ \underset{ k \in[1,m], p_n \in Post(t_{k}), t_l \in p_n^\circ } \bigwedge \underline{t}_{f} - \underline{p}_{nl}^k \leq 0 \ \ ]$$
Consider the following sub-formula, denoted $\varphi_1$, of $\varphi_a$: $$\underline{t}_1 \leq \underline{t}_2 ... \leq \underline{t}_m  \ \wedge \ \underset{f \in [1,m]} \bigwedge \ [  \underset{p_i \in Pre(t_{f})} \bigwedge \  \underline{p}_{if} = \underline{t}_f \ \wedge \ \underset{p_n \in Post(t_{f}), t_l \in p_n^\circ } \bigwedge {\downarrow Isa(p_{n}, t_l)} \leq \underline{p}_{nl}^f - \underline{t}_{f} \leq
{\uparrow Isa(p_{n},t_l)}]$$
This formula implies that:
(1) \ $\forall f \in [1,m], \forall k \in [f,m], \underline{t}_f \leq \underline{t}_k$. \\
(2) \ $\forall f \in [1,m],\forall k \in [f,m], \forall p_j \in Pre(t_k), \underline{t}_f \leq \underline{t}_k = \underline{p}_{jk}$.\\ Then: (2') $\varphi_1 \Rightarrow \underset{f \in [1,m], k \in [f,m], p_j \in Pre(t_{k})} \bigwedge   \  \underline{t}_{f} - \underline{p}_{jk} \leq 0$.\\
(3) \ $\forall f \in [1,m],\forall k \in [f,m], \forall p_n \in Post(t_k), \forall t_l \in p_n^\circ, \ \underline{t}_f \leq \underline{t}_k \leq \underline{p}_{nl}^k$.\\ Then: (3') $\varphi_1 \Rightarrow \underset{f \in [1,m], k \in [f,m], p_n \in Post(t_{k}), t_l \in p_n^\circ } \bigwedge   \  \underline{t}_{f} - \underline{p}_{nl}^k \leq 0$.\\
Consider the following sub-formula, denoted $\varphi_2$, of $\varphi_a$: $$\underset{f \in [1,m], (p_j,t_k) \in (EE(M) - Deada) - \underset{l \in [1,f[} \bigcup Pre(t_{l}) \times T} \bigwedge   \  \underline{t}_{f} - \underline{p}_{jk} \leq 0$$
From (2'), it follows that constraints (2) are redundant in the part $\varphi_2$ of $\varphi_a$ and then can be eliminated from the part $\varphi_2$ of $\varphi_a$, without altering the domain of $\varphi_a$:
$$\underset{f \in [1,m], (p_j,t_k) \in (EE(M) - Deada) - \underset{l \in [1,m[} \bigcup Pre(t_{l}) \times T} \bigwedge   \  \underline{t}_{f} - \underline{p}_{jk} \leq 0$$
Let $\varphi_3$ be the following part of $\varphi_a$: $$\underset{f \in [1,m], k \in [1,f[, p_n \in Post(t_{k}),t_l \in p_n^\circ } \bigwedge   \  \underline{t}_{f} - \underline{p}_{nl}^k \leq 0$$
From (3'), it follows that constraints (3) are redundant in the part $\varphi_1$ of $\varphi_a$ and then can be added to the part $\varphi_3$ of $\varphi_a$, without altering the domain of $\varphi_a$:
$$\underset{f \in [1,m], k \in [1,m], p_n \in Post(t_{k}), t_l \in p_n^\circ } \bigwedge   \  \underline{t}_{f} - \underline{p}_{nl}^k \leq 0$$
Therefore, $\varphi_a \equiv \psi_a \ \wedge \  \underline{t}_1 \leq \underline{t}_2 \leq ... \leq \underline{t}_m \ \ \wedge \ $ $$ \underset{f \in [1,m]} \bigwedge \ [ \ \underset{p_i \in Pre(t_{f})} \bigwedge  \underline{p}_{if} = \underline{t}_f \ \ \wedge \ \ \underset{p_n \in Post(t_{f}), t_k \in p_n^\circ } \bigwedge {\downarrow Isa(p_{n},t_k)} \leq \underline{p}_{nk}^f - \underline{t}_{f} \leq
{\uparrow Isa(p_{n},t_k)} \ \wedge $$   $$ \underset{(p_j,t_k) \in (EE(M)-Deada) - \underset{l \in [1,m]} \bigcup Pre(t_{l}) \times T} \bigwedge   \  \underline{t}_{f} - \underline{p}_{jk} \leq 0  \  \wedge  \   \underset{ k \in[1,m], p_n \in Post(t_{k}), t_l \in p_n^\circ } \bigwedge \underline{t}_{f} - \underline{p}_{nl}^k \leq 0 \ \ ]$$
The firing condition of transitions of $T_m$ in any order, denoted $\psi_a'$, is obtained by eliminating the part fixing the firing order. To obtain the formula of $\gamma'$, it suffices to put $\psi_a'$ in canonical form and then eliminate variables associated with transitions of $T_m$ and their input places. \normalsize
 \end{proof}

\par The extension of this result to unsafe A-TPN is straightforward by considering multisets of tokens, multisets of enabled arcs, and associating a variable with each instance of multiple enabled arcs. Each enabled transition is defined by the name of the transition and a set of enabled arcs.

\par Using the translation into A-TPN of the P-TPN shown in Figure 2.a), we prove that the union of the SCG state classes of the A-TPN reached by different interleavings of the same set of transitions is not necessarily convex\footnote{The P-TPN is translated into A-TPN by replacing the static residence interval function $Isp$ by $Isa$ defined by: $\forall p_i \in P, t_j \in p_i^\circ, Isa(p_i,t_j)=Isp(p_i)$.}. Indeed, its initial SCG state class $(p_1+p_2, \emptyset, 1 \leq \underline{pt}_{11} \leq 3 \
\wedge \ 2 \leq \underline{pt}_{22} \leq 4)$, sequences $t_1t_2$ and $t_2t_1$ lead
respectively to the SCG state classes: $(p_3+p_4, \emptyset, 0 \leq \underline{pt}_{33}
\leq 1 \wedge \underline{pt}_{44}=2 \wedge -2 \leq \underline{pt}_{33}-\underline{pt}_{44} \leq -1)$ and $(p_3+p_4, \emptyset, \underline{pt}_{33}=1 \wedge 1 \leq \underline{pt}_{44} \leq 2 \wedge -1 \leq \underline{pt}_{33}-\underline{pt}_{44} \leq 0)$. The union of their domains is not convex.

\section{Conclusion}
\noindent In this paper, we have considered the \tppn~ and \tapn~
models, their SCG and CSCG. We have investigated the convexity of the union of state classes
reached by different interleavings of the same set of transitions.
We have shown that this union is not convex in the SCG but is convex in the CSCG. This result
allows to use the reachability analysis approach proposed in \cite{Maler06}, which reduces the redundancy caused by the interleaving semantics.
\par This result is however not valid for the \ttpn~\cite{infinity08}, in spite of the fact that A-TPN is the most powerful model. This could be explained by the fact that the firing interval of a transition refers to the instant when it becomes enabled in \ttpn, whereas, in \{P,A\}-TPN, it is equal to the intersection of intervals of all its input tokens/arcs. In T-TPN, the firing interval can be related to the last transition of a sequence and then dependent of the firing order. For example, consider the net shown in Figure 2.b) and suppose that intervals attached to places are moved to be attached to their output transitions. The firing of transitions $t_1$ and $t_2$, in any order, will enable transition $t_5$. But, the firing interval of $t_5$ is related to $t_2$ in $t_1t_2$, whereas it is related to $t_1$ in $t_2t_1$. The union of the CSCG state classes reached by $t_1t_2$ and $t_2t_1$ from the initial state class is: $(p_3+p_4+p_5+p_6, (-8 \leq \underline{t}_3- \underline{t}_4 \leq 1 \wedge  -6 \leq \underline{t}_3 - \underline{t}_5 \leq 1 \wedge  2 \leq \underline{t}_4 - \underline{t}_5 \leq 4 ) \vee (-3 \leq \underline{t}_3- \underline{t}_4 \leq 2 \wedge  -1 \leq \underline{t}_3- \underline{t}_5 \leq 5 \wedge  1 \leq \underline{t}_4 - \underline{t}_5 \leq 4 ))$.  Its domain is not convex.
\par Therefore, A-TPN is more powerful than T-TPN and also more suitable for abstractions by convex-union. However, the translation of T-TPN into A-TPN is not easy and needs to add several places and transitions \cite{boyer-FI-08}, which may offset the benefits of abstractions by convex-union. The choice of the appropriate \{P,T,A\}-TPN model for a given problem should be a good compromise between the easiness of modeling the problem and the verification complexity.
% , the firing condition of $t_1$ and $t_2$ in any order is  $ [(-5 \leq \underline{t}_1 - \underline{t}_2 \leq 0 \wedge 0 \leq \underline{t}_5 - \underline{t}_2 \leq 2) \vee (0 \leq \underline{t}_1 - \underline{t}_2 \leq 1 \wedge 0 \leq \underline{t}_5 - \underline{t}_1 \leq 2)]   \wedge  (1 \leq \underline{t}_3 - \underline{t}_1 \leq 5 \wedge  4 \leq \underline{t}_4 - \underline{t}_2 \leq 4)$.
% \indent In \cite{boyer-FI-08}, the authors have shown that the expressiveness \ttpn~and \tppn~are incomparable but both models are less expressive than the \tapn.  As immediate perspective, we will investigate the convexity of this union for the \tapn~model.
\par As immediate perspective, we will use the results established here and in \cite{infinity08} to investigate the extension, to \{P,T,A\}-TPN, of the reachability approach proposed in \cite{Myers} for a variant of safe P-TPN. In this variant, there are two kinds of places (behaviour and constraint places) and each transition can have at most one behaviour place in its preset. A transition is firable, if the age of its behaviour place reaches its static residence interval. It must be fired before overpassing this interval, unless it is disabled.

\end{document}